\titleformat*{\section}{\bf\large}
\theoremstyle{definition}
\newtheorem{proposition}{Proposition}
\newtheorem{lemma}{Lemma}
\newtheorem{example}{Example}
\newtheorem{corollary}{Corollary}
\apptocmd{\sloppy}{\hbadness 10000\relax}{}{} 
\def\ind{\begin{picture}(9,8)
         \put(0,0){\line(1,0){9}}
         \put(3,0){\line(0,1){8}}
         \put(6,0){\line(0,1){8}}
         \end{picture}
        }
\begin{document}

\doublespacing
\frenchspacing
\title{\bf 
Treatment effects on ordinal outcomes: Causal estimands and sharp bounds}
\author{Jiannan Lu, Peng Ding~and Tirthankar Dasgupta\thanks{Jiannan Lu is Data Scientist (E-mail: jiannl@microsoft.com), Analysis and Experimentation, Microsoft Corporation, Redmond, WA 98052, U.S.A. Peng Ding is Assistant Professor (E-mail: pengdingpku@berkeley.edu), Department of Statistics, University California, Berkeley, CA 94270, U.S.A. Tirthankar Dasgupta is Associate Professor (E-mail: tirthankar.dasgupta@rutgers.edu), Department of Statistics and Biostatistics, Rutgers University, Piscataway, NJ 08854-8019, U.S.A. 
}}

\date{}
\maketitle

\begin{abstract}
\frenchspacing
Assessing the causal effects of interventions on ordinal outcomes is an important objective of many educational and behavioral studies. Under the potential outcomes framework, we can define causal effects as comparisons between the potential outcomes under treatment and control. However, unfortunately, the average causal effect, often the parameter of interest, is difficult to interpret for ordinal outcomes. To address this challenge, we propose to use two causal parameters, which are defined as the probabilities that the treatment is beneficial and strictly beneficial for the experimental units. However, although well-defined for any outcomes and of particular interest for ordinal outcomes, the two aforementioned parameters depend on the association between the potential outcomes, and are therefore not identifiable from the observed data without additional assumptions. Echoing recent advances in the econometrics and biostatistics literature, we present the sharp bounds of the aforementioned causal parameters for ordinal outcomes, under fixed marginal distributions of the potential outcomes. Because the causal estimands and their corresponding sharp bounds are based on the potential outcomes themselves, the proposed framework can be flexibly incorporated into any chosen models of the potential outcomes, and are directly applicable to randomized experiments, unconfounded observational studies, and randomized experiments with noncompliance. We illustrate our methodology via numerical examples and three real-life applications related to educational and behavioral research.
\end{abstract}

\textbf{Keywords:} 
Linear programming; Monotonicity; Noncompliance; Partial identification; Potential outcome; Stochastic dominance.

\section{Introduction}
\frenchspacing

In educational, behavioral and public health research, a scenario frequently encountered is evaluating causal effects of interventions on ordinal (i.e., ordered categorical) outcomes. For example, \cite{Oenema:2001} conducted a randomized controlled trial to access whether web-based nutrition education changed personal awareness and intentions (e.g. negative, neutral or positive attitudes) towards healthier diets. \cite{Hoff:2009} analyzed a data-set from the 1994 General Social Survey \citep{Smith:2013}, aiming to study whether the fact that parents possessing college or higher degrees affected their offspring's education level (from ``less than high school'' to ``graduate degree''). \cite{Praet:2014} investigated the effect of computer-aided programs on young children's proficiency in arithmetic (e.g., 0--10 scaled scores in reading, writing and counting). To draw scientifically meaningful conclusions from such studies, it is imperative that we employ an interpretable and robust methodology for defining and inferring causal effects.

The potential outcomes framework \citep{Neyman:1923, Rubin:1974} permits defining causal effects as comparisons between the potential outcomes under treatment and control. The average causal effect, generally the parameter of interest ever since the seminal work of \citet{Neyman:1923}, may not be applicable to ordinal outcomes, because average outcomes themselves are not well-defined substantively (although they can be well-defined mathematically),{
\color{black} except when there are meaningful distances between outcomes (e.g., standard test scores).
}For example, it is difficult to interpret the ``average'' of ``high school'' and ``Ph.D.,'' or compare it to the ``average'' of ``bachelor'' and ``master.'' Nevertheless, ordinal outcomes appear rather frequently in applied research, and the generalized linear model literature \citep[cf.][]{Agresti:2010} has discussed them extensively. However, although the model parameters of the generalized linear models are useful summaries of the data, they are often not direct measures of the causal effects of interest \citep{Freedman:2008}. More importantly, statistical inference often requires correctly-specified models, and when the generalized linear model assumptions are violated, the interpretations of the parameters become obscure. Mainly focused on the classic average causal effect (and its variants), the existing causal inference literature does not thoroughly investigate ordinal outcomes. Exceptions include \cite{Rosenbaum:2001}, who discussed causal inference for ordinal outcomes under the monotonicity assumption that the treatment is beneficial for all units. \cite{Cheng:2009}, \cite{Agresti:2010} and \citet{Agresti:2017} discussed various causal parameters under the assumption of independent potential outcomes. \citet{Volfovsky:2015} exploited a Bayesian strategy, requiring a full parametric model on the joint values of the potential outcomes. \cite{Diaz:2016} proposed to use a causal parameter that did not rely on the assumption of the proportional odds model for ordinal outcomes.

Realizing the conceptual and theoretical gaps in this important topic, in this paper we propose to use two causal parameters for ordinal outcomes, measuring the probabilities that the treatment is beneficial and strictly beneficial for the experimental units. The two parameters play important roles in decision and policy making for randomized evaluations with ordinal outcomes. However, because the two causal parameters depend on the association between the treatment and control potential outcomes, they are generally not identifiable from the observed data. Instead of imposing assumptions about the underlying distributions of, or the association between, the potential outcomes, we adopt the partial identification philosophy \citep[c.f.][]{Manski:2003, Richardson:2014} and sharply bound the parameters by using the marginal distributions of the potential outcomes. We acknowledge concurrent work by \cite{Huang:2017}, who numerically calculated the sharp bounds of the parameters and provided their consistent estimators,{
\color{black} allowing for potentially complex support restrictions on the marginal distributions of the potential outcomes.
}Compared to \cite{Huang:2017}, one main distinction of our work is that we focus on the identification perspective. To be specific, echoing several relevant discussions in the discrete mathematics \citep{Williamson:1990} and econometrics \citep[e.g.,][]{Manski:1997, Manski:2009, Fan:2009, Kim:2014} literature, we present closed form expressions for the sharp bounds of the causal parameters. 

We believe that the mathematical practice of deriving the closed-form expressions for the sharp bounds has a two-fold benefit. From a theoretical perspective, the closed form expressions enable us to study when we can identify the causal parameters, i.e., the lower and upper bounds collapse. At least in the context of ordinal outcomes, we believe this is a unique contribution to the existing literature. From a more practical perspective, because these bounds are defined by the potential outcomes themselves, they can be incorporated flexibly into any chosen models of the potential outcomes in practice. Furthermore, they are directly applicable to randomized experiments, unconfounded observational studies, and randomized experiments with noncompliance. In randomized experiments, we can identify the bounds immediately, and additionally, sharpen the bounds by exploiting covariate information under certain modeling assumptions. In observational studies, if the treatment assignment is unconfounded given the observed covariates, we can identify the bounds, for example by the propensity score weighting \citep{Rosenbaum:1983, Hirano:2003}. Furthermore, we extend the theory to accommodate noncompliance, which often arises in practical randomized evaluations.

The paper proceeds as follows. Section \ref{sec:review} introduces the potential outcomes framework for causal inference for ordinal outcomes, and proposes two causal parameters that are natural measures of causal effects and are of practical importance. Section \ref{sec:theory} presents the sharp bounds of the proposed causal parameters. Section \ref{sec:noncompliance} generalizes the bounds to noncompliance. Section \ref{sec:inf} discusses statistical inference of the bounds. Sections \ref{sec:simu} and \ref{sec:example} present numerical and real examples to illustrate the theoretical results. We conclude in Section \ref{sec:discuss}, and give all the proofs, technical and computational details in the Supplementary Material.

\section{Causal Inference for Ordinal Outcomes}\label{sec:review}

\subsection{Potential outcomes}
We consider a study with $N$ units, a binary treatment, and an ordinal outcome with $J$ categories labeled as $0,\ldots,J-1,$ where 0 and $J-1$ represent the worst and best categories. Under the Stable Unit Treatment Value Assumption \citep{Rubin:1980} that there is only one version of the treatment and no interference among the units, we define the pair $\left\{Y_i(1), Y_i(0)\right\}$ as the potential outcomes of the $i$th unit under treatment and control, respectively. Let
$$
p_{kl} = \mathrm{pr} \left\{ Y_i(1) = k, Y_i(0) = l \right\} \quad (k,l=0, \ldots, J-1)
$$
denote the proportion or probability of units whose potential outcome is $k$ under treatment and $l$ under control. The probability notation ``$\mathrm{pr}(\cdot)$'' is either for a finite population of $N$ units or for a super population, depending on the question of interest. The probability matrix $\bm P = (p_{kl})_{0\leq k,l\leq J-1}$ summarizes the (unconditional) joint distribution of the potential outcomes. We denote the row and column sums of $\bm P$ by
\begin{equation*}
p_{k+} = \sum_{l^\prime=0}^{J-1}p_{kl^\prime},
\quad
p_{+l} = \sum_{k^\prime =0}^{J-1}p_{k^\prime l}
\quad
(k, l = 0, 1, \ldots, J-1).
\end{equation*}
The vectors $\bm p_1 = \left( p_{0+}, \ldots, p_{J-1, +} \right)^\textrm{T}$ and $\bm p_0 = \left( p_{+0}, \ldots, p_{+, J-1} \right)^\textrm{T}$ characterize the marginal distributions of the potential outcomes under treatment and control, respectively. By definition, the following constraints must hold: 
\begin{equation*}
\sum_{k=0}^{J-1} p_{k+} = 1,
\quad
\sum_{l=0}^{J-1}p_{+l} = 1,
\quad
\sum_{k=0}^{J-1} \sum_{l=0}^{J-1}p_{kl} = 1.
\end{equation*}

\subsection{Causal parameters for ordinal outcomes}

We discuss the existing causal parameters for ordinal outcomes, and the motivation behind proposing new ones. Any causal parameter is a function of the probability matrix $\bm P.$ Unfortunately, the average causal effect is difficult to interpret for ordinal outcomes. Instead, we can use the distributional causal effects \citep[cf.][]{Ju:2010}
\begin{equation}\label{eq:DCE}
\Delta_j = \mathrm{pr}\left\{ Y_i(1) \ge j \right\} - \mathrm{pr}\left\{ Y_i(0) \ge j \right\}
= \sum_{k\ge j} p_{k+}  -  \sum_{l\ge j}p_{+l}
\quad
(j = 0, \ldots, J-1)
\end{equation}
to measure the difference between the marginal distributions of potential outcomes at different levels of $j.${
\color{black} Although distributional causal effects are standard and important measures for ordinal outcomes in practice, it is sometimes difficult to decide whether the treatment or the control is preferable unless they have the same sign for all $j.$ In the presence of heterogeneous distributional treatment effects for different levels of $j,$  we may use
}$\sum_{j=1}^{J-1}\omega_j \Delta_j$ to measure the treatment effect, but such a measure depends crucially on the weights $\omega_j$'s. We illustrate this point by using the following numerical example.

\begin{example}\label{example:0}
Let
$
\bm p_1 = \left( 1/5, 3/5, 1/5 \right)^\textrm{T}
$
and
$
\bm p_0 = \left( 2/5, 1/5, 2/5 \right)^\textrm{T},
$
with $\Delta_0 = 0,$ $\Delta_1 = 1/5$ and $\Delta_2 = -1/5.$ The treatment is beneficial at level 1, but not at level 2. In this case,  distributional causal effects do not provide straightforward guidance for decision making.
\end{example}

When $\Delta_j \geq 0$ for all $j$, $Y(1)$ stochastically dominates $Y(0)$. When this pattern appears in real data applications, practitioners often fit a proportional odds model \citep{Agresti:2010} and summarize the overall effectiveness of the treatment by a single odds ratio parameter. Although such summary parameter may be useful in certain cases, its causal interpretation is unclear. Moreover, when the data does not present the stochastic dominance pattern as in Example \ref{example:0}, summarizing the treatment effect by the single odds ratio parameter of a wrong model often gives misleading conclusions.

\cite{Volfovsky:2015} studied the conditional medians
\begin{equation}\label{eq:conditional-median}
m_j
=\mathrm{med}\left\{Y_i( 1)\mid Y_i(0)=j\right\}
\quad
(j=0, \ldots, J-1),
\end{equation}
which is a set containing all values of $k$ such that
$
\sum_{k' = 0}^{k} p_{k' j} \ge p_{+j}/2
$
and
$
\sum_{k' = k}^{J-1} p_{k'j} \ge p_{+j}/2.
$
By definition, the conditional medians may not be unique, and they are only well-defined for $j$ with $p_{+j} >0.$ Moreover, they are not direct measures of the treatment effect itself.

We propose to use two causal parameters that measure the probabilities that the treatment is beneficial and strictly beneficial for the experimental units:
\begin{equation}\label{eq:obj}
\tau = \mathrm{pr} \left\{ Y_i (1) \ge Y_i (0) \right\} = \mathop{\sum\sum}_{ k \ge  l}p_{kl},
\quad
\eta = \mathrm{pr} \left\{ Y_i (1) > Y_i (0) \right\} = \mathop{\sum\sum}_{ k >  l}p_{kl}.
\end{equation}
The causal parameters $\tau$ and $\eta$ are measures of causal effects that are well-defined for any types of outcomes, and of particular interest to ordinal outcomes.{
\color{black} To be more specific, they can complement the distributional causal effects and provide more information about what would happen under treatment versus control for an ordinal outcome.
}Similar causal measures appeared in biomedical \citep{Gadbury:2000, Newcombe:2006a, Newcombe:2006b, Zhou:2008, Huang:2017, Demidenko:2016} and social sciences \citep{Heckman:1997, Djebbari:2008, Fan:2010, Fan:2014}. In practice, we suggest using the pair $(\tau, \eta)$ as measures of causal effects on ordinal outcomes. For example, if the sharp null holds, i.e., $Y_i(1) = Y_i(0)$ for all units $i$, then $\tau = 1$ and $\eta = 0$. In this case, using only $\tau$ may be misleading. Nevertheless, we argue that the parameter $\tau$ is as important as $\eta$. Because $ 1 - \tau =  \mathrm{pr} \left\{ Y_i (0) > Y_i (1) \right\}$, the value of $\tau$ determines the probability that the control is strictly beneficial for the experimental units. Due to the symmetry of treatment and control labels, $\tau$ and $\eta$ are equally useful for real data analysis.

We use the following numerical example to show the values of $m_j$, $\tau$ and $\eta.$

\begin{example}\label{example:toy}
Consider the following probability matrix:
\begin{equation*}
\bm P =
\left(
\begin{array}{ccc}
0 & 1/6 & 1/6 \\
0 & 1/6 & 0 \\
0 & 1/3 & 1/6 \\
\end{array} \right).
\end{equation*}
In this case, $m_0$ is not well-defined because $p_{k0}=0$ for all $k,$ $m_1$ is $1,$ and $m_2 = \{ 0, 1 , 2\}$ by the definition of the conditional median in \eqref{eq:conditional-median}. However, we have $\tau=2/3$ and $\eta=1/3,$ i.e., two thirds of the population benefit from the treatment and one third strictly benefit.
\end{example}

The causal parameters $\tau$ and $\eta$ in \eqref{eq:obj} are well-defined for both finite populations and super populations. They are functions of the potential outcomes, which distinguishes them from the parameters in super population models. When the models are mis-specified, the interpretations of the corresponding model parameters are often obscure. We have already discussed this issue for the proportional odds model. Our causal parameters $\tau$ and $\eta$ are closely related to the relative treatment effect 
$
\alpha = \mathrm{pr}\left\{ Y_i(1) > Y_i(0) \right\} - \mathrm{pr}\left\{ Y_i(1) < Y_i(0) \right\}
$
previously studied under the assumption of independent potential outcomes \citep{Agresti:2010}.
This relative treatment effect $\alpha$ and the causal parameters we proposed have a simple algebraic relationship, i.e., $\alpha=\tau+\eta-1.$ Therefore, our newly proposed causal parameters $\tau$ and $\eta$ determine $\alpha.$ Furthermore, these causal parameters are also related to the notation of ``probability of causation'' \citep{Pearl:2009}, because their direct interpretations are the probabilities or proportions that the treatment affects the outcome on the individual level. It is for these reasons that we advocate using $\tau$ and $\eta$ as causal effect measures for ordinal outcomes.

\section{Sharp Bounds on the Proposed Causal Estimands for Ordinal Outcomes}\label{sec:theory}

\subsection{Closed-form expressions of sharp bounds}

The definitions of $\tau$ and $\eta$ involve the association between the treatment and control potential outcomes. Because we can never jointly measure the potential outcomes, the observed data do not provide full information about their association, rendering the causal parameters $\tau$ and $\eta$ not identifiable. To partially circumvent this difficulty, we focus on the sharp bounds of $\tau$ and $\eta,$ which are the minimal and maximal values of $\tau$ and $\eta$ under the condition that the probability matrix
$
\bm P = (p_{kl})_{0\leq k,l\leq J-1}
$
is well-defined, as well as the constraints of the marginal distributions. In other words, the following needs to hold:
\begin{equation}\label{eq:linear-constraint}
\sum_{l^\prime = 0 }^{J-1} p_{k l^\prime} = p_{k +},
\quad
\sum_{k^\prime = 0}^{J-1}p_{k^\prime l} = p_{+ l},
\quad
p_{kl} \ge 0
\quad
(k, l = 0,\ldots,J-1).
\end{equation}
The sharp bounds depend only on the marginal distributions of the potential outcomes. Deriving the sharp bounds is equivalent to solving linear programming problems, because the objective functions in \eqref{eq:obj} and the constraints in \eqref{eq:linear-constraint} are all linear. Previous literature \citep{Huang:2017} used a numerical method to solve the linear programming problem for $\eta$. 
Fortunately, as pointed out by several researchers \citep{Williamson:1990, Fan:2009}, we can derive closed-form solutions of the above linear programming problems, for both $\tau$ and $\eta$. We first present the sharp bounds of $\tau,$ which is the foundation for the remaining of the paper.

\begin{proposition}\label{thm:tau}
The sharp lower and upper bound of $\tau$ are
\begin{equation}
\label{eq:tau}
\tau_L =\max\limits_{0 \le j \le J-1} \left( p_{+ j} + \Delta_j \right),
\quad
\tau_U = 1 + \min\limits_{0 \le j \le J-1} \Delta_j.
\end{equation}
\end{proposition}
{\color{black}
The bounds in \eqref{eq:tau} resemble \cite{Fan:2010}'s parallel results for continuous outcomes, where the maximum and minimum operators are replaced by supremum and infimum, respectively.
}As a straightforward validity check, note that the inequalities $0 \le \tau_L \le \tau_U \le 1$ always hold, regardless of the marginal distributions of the potential outcomes. In particular, by definition in \eqref{eq:tau}, $\tau_L \ge p_{+0} + \Delta_0 = p_{+0},$ and $\tau_U \le 1 + \Delta_0 = 1.$ Moreover, the bounds in Proposition \ref{thm:tau} are closely related to the distributional causal effects in \eqref{eq:DCE}, and therefore we can interpret them as the conservative and optimistic estimates of the probability that the treatment is beneficial to the outcome. Furthermore, the following corollary demonstrates that the sharp upper bound $\tau_U$ is related to the \emph{stochastic dominance} assumption, i.e.,
$
\Delta_j \ge 0
$
for all $j.$

\begin{corollary}\label{coro:tauu-and-SD}
The causal parameter $\tau_U = 1,$ if and only if the marginal probabilities $\bm p_1$ and $\bm p_0$ satisfy the stochastic dominance assumption.
\end{corollary}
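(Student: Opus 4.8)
The plan is to read the result straight off Theorem \ref{thm:tau}, using the fact that the lowest-level distributional causal effect is automatically zero. Since both potential outcomes take values in $\{0,\ldots,J-1\}$, we always have $\mathrm{pr}\{Y_i(1)\ge 0\}=\mathrm{pr}\{Y_i(0)\ge 0\}=1$, hence $\Delta_0=0$. This is the only non-bookkeeping observation in the argument, and it is what pins the sign of $\min_j \Delta_j$.

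First I would invoke Theorem \ref{thm:tau} to write $\tau_U = 1 + \min_{0\le j\le J-1}\Delta_j$. Next I would note from $\Delta_0 = 0$ that $\min_{0\le j\le J-1}\Delta_j \le 0$, so $\tau_U \le 1$ unconditionally (a sanity check, since $\tau_U$ bounds a probability). Then the equivalence is immediate: $\tau_U = 1$ holds if and only if $\min_{0\le j\le J-1}\Delta_j = 0$, and since this minimum is already known to be at most $0$, it equals $0$ exactly when $\Delta_j \ge 0$ for every $j=0,\ldots,J-1$, which is precisely the statement that $\bm p_1$ stochastically dominates $\bm p_0$ (recall $\Delta_j = \mathrm{pr}\{Y_i(1)\ge j\} - \mathrm{pr}\{Y_i(0)\ge j\}$ from \eqref{eq:DCE}).

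There is no real obstacle here; the corollary is a direct consequence of the closed form for $\tau_U$ in Theorem \ref{thm:tau} together with the identity $\Delta_0 = 0$. If anything, the only point worth stating carefully is that the "only if" direction uses $\Delta_0 = 0$ to rule out the minimum being strictly negative while $\tau_U$ still equalling $1$, and the "if" direction is simply that $\Delta_j \ge 0$ for all $j$ combined with $\Delta_0 = 0$ forces $\min_j \Delta_j = 0$.
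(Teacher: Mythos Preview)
Your proposal is correct and follows essentially the same argument as the paper: both invoke the closed form $\tau_U = 1 + \min_j \Delta_j$ from Theorem~\ref{thm:tau} together with the identity $\Delta_0 = 0$ to conclude that $\tau_U = 1$ is equivalent to $\Delta_j \ge 0$ for all $j$. Your write-up is slightly more explicit about separating the two directions, but the logic is identical.
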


The above corollary implies that for any marginal probabilities satisfying the stochastic dominance assumption, there exists a lower triangular probability matrix $\bm{P}$ that corresponds to a population satisfying the monotonicity assumption, i.e.,
$
Y_i(1) \ge Y_i(0)
$
for all $i.$
\citet{Strassen:1965} and \citet{Rosenbaum:2001} demonstrated this result, and Proposition \ref{thm:tau} extends the previous result without imposing the stochastic dominance assumption. Moreover, Proposition \ref{thm:tau} also justifies the use of $\min_{0 \le j \le J-1} \Delta_j$ as a measure of the deviation from the stochastic dominance assumption \citep{Scharfstein:2004}.

To bounds $\eta,$ realizing that
$
\eta = 1- \mathrm{pr}\left\{Y_i(0) \ge Y_i(1)\right\},
$
we can directly derive the sharp bounds for
$
\mathrm{pr}\left\{Y_i(0)\ge Y_i(1)\right\}
$
by switching the treatment and control labels and applying \eqref{eq:tau}.

\begin{proposition}\label{thm:eta}
The sharp lower and upper bounds of $\eta$ are
\begin{equation}\label{eq:eta}
\eta_L = \max\limits_{0 \le j \le J-1} \Delta_j,
\quad
\eta_U = 1 + \min_{0 \le j \le J-1} \left(\Delta_j - p_{j +}\right).
\end{equation}
\end{proposition}

Similarly as the sharp bounds for $\tau$ in \eqref{eq:tau}, the inequalities $0 \le \eta_L \le \eta_U \le 1$ always hold. The bounds in \eqref{eq:tau}--\eqref{eq:eta} resembles parallel results in the econometrics literature \citep{Manski:1997, Manski:2000, Fan:2009, Fan:2010}, which largely focused on continuous outcomes. In fact, deriving the sharp bounds of $\tau$ and $\eta$ is related to a classical probability problem posed by A. N. Kolmogorov \citep[c.f.][]{Nelsen:2006}: how to bound the distribution of the sum (or difference) of two random variables with fixed marginal distributions? For continuous outcomes, because $\delta = Y(1)-Y(0)$ is well-defined, our causal parameters $\tau$ and $\eta$ are determined by the distribution of the causal effect $\delta$, the difference between the treatment and control potential outcomes. Indeed, sharp bounds on the distribution of $\delta$ have been obtained by \cite{Makarov:1982}, \cite{Ruschendorf:1982}, \cite{Frank:1987} and \cite{Williamson:1990}, and recently reviewed by \cite{Fan:2014}. For ordinal outcomes however, although mathematically valid, the interpretation of $Y(1)-Y(0)$ becomes more challenging, at least in many scenarios. For example, in the context of education it is difficult to define the ``difference'' of ``Ph.D.'' and ``master.'' In behavioral research, it is unclear how to compare the improvement from ``negative'' to ``neutral'' and from ``neutral'' to ``positive.'' 

Motivated by the above, in the Supplementary Material, for ordinal outcomes we provide direct proofs of Propositions \ref{thm:tau}--\ref{thm:eta}. Our proofs directly construct the probability matrices that achieve the lower and upper bounds of $\tau$ and $\eta.$ We believe that our ``constructive'' approach helps researchers sharply bound other causal parameters (e.g., $m_j$ and $\alpha$), at least for ordinal outcomes. It is worth mentioning that, the probability matrices attaining the lower and upper bounds of $\tau$ and $\eta$ correspond to negatively associated and positively associated potential outcomes. They are both ``extreme'' scenarios. In practice, researchers may also be interested in the case with independent potential outcomes \citep{Rubin:1978, Cheng:2009, Agresti:2010, Ding:2016}, i.e.,
$
p_{kl} = p_{k+} p_{+l}
$
for all $k$ and $l.$
With independent potential outcomes, we can identify $\tau$ and $\eta$ from the marginal distributions of the potential outcomes.

\begin{proposition}\label{thm:ind}
With independent potential outcomes,
\begin{equation*}
\tau_I = \mathop{\sum\sum}_{ k \ge l}p_{k+}p_{+l},
\quad
\eta_I = \mathop{\sum\sum}_{ k > l}p_{k+}p_{+l}.
\end{equation*}
Furthermore,
$
\tau_L\le \tau_I \le \tau_U
$
and
$
\eta_L \le \eta_I \le \eta_U.
$
\end{proposition}

In cases where negatively associated potential outcomes are unlikely, we can use $\tau_I$ and $\eta_I$ as the lower bounds of $\tau$ and $\eta.$ Below we give two numerical examples to illustrate Propositions \ref{thm:tau}--\ref{thm:ind}.

\begin{example} \label{example:3}
The marginal probabilities
$
\bm p_1 = (1/5, 3/5, 1/5)^\textrm{T}
$
and
$
\bm p_0 = (2/5, 1/5, 2/5)^\textrm{T}
$
do not satisfy the stochastic dominance assumption, because $\Delta_0 = 0,$ $\Delta_1 = 1/5>0$ and $\Delta_2=-1/5<0.$ Propositions \ref{thm:tau} and \ref{thm:ind} imply that
$
\tau_L = 2/5,
$
$
\tau_I = 16/25,
$
and
$
\tau_U = 4/5.
$
The probability matrices corresponding to negatively associated, independent, and positively associated potential outcomes achieving these values are respectively
\begin{equation}
\bm P_1 =
\left(
\begin{array}{ccc}
0 & 1/5 & 0 \\
1/5 & 0 & 2/5 \\
2/5 & 0 & 0\\
\end{array}
\right),
\quad
\bm P_2 =
\left(
\begin{array}{ccc}
2/25 & 1/25 & 2/25 \\
6/25 & 3/25 & 6/25 \\
2/25 & 1/25 & 2/25 \\
\end{array}
\right),
\quad
\bm P_3 =
\left(
\begin{array}{ccc}
1/5 & 0 & 0 \\
1/5 & 1/5 & 1/5 \\
0 & 0 & 1/5 \\
\end{array}
\right).
\label{eq:joint-example3}
\end{equation}
Similarly, Propositions \ref{thm:eta} and \ref{thm:ind} imply
$
\eta_L = 1/5,$ $\eta_I = 9/25,$ and $\eta_U = 3/5.
$
\end{example}

\begin{example}\label{example:4}
The marginal probabilities
$
\bm p_1=(1/5, 1/5, 3/5)^\textrm{T}
$
and
$
\bm p_0=(3/5, 1/5, 1/5)^\textrm{T}
$
satisfy the stochastic dominance assumption, because $\Delta_0=0,$ $\Delta_1 = 2/5>0$ and $\Delta_2=2/5>0.$ Propositions \ref{thm:tau} and \ref{thm:ind} imply
$
\tau_L = 3/5,
$
$
\tau_I=22/25,
$
and
$
\tau_U = 1.
$
The probability matrices corresponding to negatively associated, independent, and positively associated potential outcomes achieving these values are respectively
\begin{equation}
\bm P_4 =
\left(
\begin{array}{ccc}
0 & 1/5 & 0 \\
0 & 0 & 1/5 \\
3/5 & 0 & 0\\
\end{array} \right),
\quad
\bm P_5 =
\left(
\begin{array}{ccc}
3/25 & 1/25 & 1/25 \\
3/25 & 1/25 & 1/25 \\
9/25 & 3/25 & 3/25 \\
\end{array} \right),
\quad
\bm P_6 =
\left(
\begin{array}{ccc}
1/5 & 0 & 0 \\
0 & 1/5 & 0 \\
2/5 & 0 & 1/5 \\
\end{array} \right).
\label{eq:joint-example4}
\end{equation}
Similarly, Propositions \ref{thm:eta} and \ref{thm:ind} imply
$
\eta_L = 2/5,$ $\eta_I = 3/5,$ and $\eta_U = 4/5.
$
\end{example}

As demonstrated in Examples \ref{example:3} and \ref{example:4}, the bounds of $\tau$ (or $\eta$) generally do not shrink to a point. However, there are some special cases in which the lower and upper bounds of $\tau$ (or $\eta$) are identical. The following corollary provides necessary and sufficient conditions for such cases.

\begin{corollary}\label{thm:bounds-shrink}
Let
$
\mathbb K = \left\{ k: p_{k+} > 0 \right\}
$
and
$
\mathbb L = \left\{ l: p_{+l} > 0 \right\}.
$
The lower and upper bounds of $\tau$ are the same, if and only if there does not exist $k_1, k_2 \in \mathbb K$ and $l_1, l_2 \in \mathbb L$ such that
\begin{equation}\label{eq:cond-bounds-shrink}
k_2 \ge l_2 > k_1 \ge l_1
\quad
\mathrm{or}
\quad
l_2 > k_2 \ge l_1 > k_1.
\end{equation}
The lower and upper bounds of $\eta$ are the same, if and only if there does not exist $k_1, k_2 \in \mathbb K$ and $l_1, l_2 \in \mathbb L$ such that
\begin{equation}\label{eq:cond-bounds-shrink-2}
l_2 \ge k_2 > l_1 \ge k_1
\quad
\mathrm{or}
\quad
k_2 > l_2 \ge k_1 > 1_1.
\end{equation}
\end{corollary}

\subsection{Covariate adjustment}\label{sec:covariate}

With pretreatment covariates, it is possible to further sharpen the bounds of the causal parameters \citep{Grilli:2008, Lee:2009, Long:2013, Mealli:2013}. Without loss of generality, we focus only on the bounds of $\tau.$ Within each level of the pretreatment covariates $\bm X = \bm x,$ 
\begin{equation*}
\tau ( \bm x )  = \mathrm{pr}   \{ Y(1) \geq Y(0)\mid \bm X = \bm x \}
\end{equation*}
is the conditional probability that the treatment is beneficial.
We can obtain the conditional lower and upper bounds
$
\tau_L ( \bm x )
$
and
$
\tau_U ( \bm x )
$
given the covariate level $\bm{x}$, then average them over the covariate distribution $F \left( \bm x \right)$, and finally obtain the adjusted bounds for $\tau:$
\begin{equation}\label{eq:covariate}
\tau_L^\prime = \int \tau_L\left( \bm x \right) F\left( d \bm x \right),
\quad
\tau_U^\prime = \int \tau_U\left( \bm x \right) F\left( d \bm x \right).
\end{equation}
\begin{proposition}\label{thm:sharper}
The adjusted bounds are tighter, i.e., $ \tau_L \le \tau_L^\prime \le \tau_U^\prime \le \tau_U.$
\end{proposition}

Proposition \ref{thm:sharper} holds intuitively, because the existence of covariates imposes more distributional restrictions on the observed data. We use the following example to illustrate Proposition \ref{thm:sharper}.

\begin{example}\label{eg:bounds-covariates}
Consider a population consisting of two sub-populations of equal sizes, labeled by a binary covariate $\bm X.$ Assume that the potential outcomes of sub-populations $\bm X=1$ and $\bm X=0$ are the independent potential outcomes in Example \ref{example:3} and \ref{example:4}. Simple algebra gives the following joint distribution, marginal distributions, and $\tau$ of the potential outcomes:
\begin{equation*}
\bm P =
\left(
\arraycolsep=2pt\def\arraystretch{1.1}
\begin{array}{ccc}
1/10 & 1/25 & 3/50 \\
9/50 & 2/25 & 7/50 \\
11/50 & 2/25 & 1/10 \\
\end{array} \right),
\quad
\bm p_1=\left( 1/5, 2/5, 2/5 \right)^\textrm{T},
\quad
\bm p_0=\left( 1/2, 1/5, 3/10 \right)^\textrm{T},
\quad
\tau = 19/25.
\end{equation*}
Without covariate information, Proposition \ref{thm:tau} implies
$
\tau_L = 1/2
$
and
$
\tau_U = 1.
$
However, if we first obtain the bounds for the two sub-populations and then average over them, we obtain sharper covariate adjusted bounds
$
\tau_L^\prime = \tau_L(1) /2 + \tau_L(0) /2 = 1/2,
$
and
$
\tau_U^\prime = \tau_U(1) /2 + \tau_U(0) /2 = 9/10.
$
\end{example}

\subsection{Identifying the bounds from observed data}

Previous subsections discussed the causal parameters $\tau$ and $\eta$ and their bounds. The causal parameters depend on the joint distribution of the potential outcomes, but the bounds depend only on the marginal distributions of the potential outcomes. In practice, the observed data provide full information about only the marginal distributions. Therefore, point estimations of the bounds can be obtained, although the causal parameters themselves are only partially identified \citep[c.f.][]{Romano:2008, Romano:2010, Richardson:2014}. 

For unit $i=1, \ldots, N,$ let the treatment indicator be $Z_i,$ and the observed outcome be
$
Y_i^\textrm{obs} = Z_i Y_i(1) + (1 - Z_i) Y_i(0).
$
To avoid conceptual complications, we consider treatment assignments that satisfy the ignorability assumption \citep{Rosenbaum:1983}, i.e., 
$
Z \ind \{ Y(1), Y(0) \} \mid \bm X.
$ 
The ignorability assumption holds by the design of randomized experiments, and cannot be validated in observational studies. Under the ignorability assumption, we define the propensity score as 
$
e(\bm X) = \mathrm{pr}(Z=1\mid \bm X),
$
which is a constant independent of $\bm X$ in completely randomized experiments. We can identify the marginal distributions of the potential outcomes by
\begin{equation*}
\mathrm{pr} \{ Y(1) = k \} = \mathrm E \left\{ \frac{Z 1(Y^\textrm{obs} = k)}{e(\bm X)} \right\},
\quad
\mathrm{pr} \{ Y(0) = l \} = \mathrm E \left\{ \frac{(1-Z) 1(Y^\textrm{obs} = l) }{ 1-e(\bm X)} \right\}. 
\end{equation*}
By replacing the expectations by their sample analogues, we obtain the moment estimators for the marginal distributions. We defer more detailed discussion about statistical inference to Section \ref{sec:inf}.

\section{Randomized Experiments with Noncompliance}\label{sec:noncompliance}

\subsection{Causal effects for compliers}

Noncompliance is an important topic in practice. For instance, in clinical trials some patients may not comply with their assigned treatments. Although noncompliance itself has been extensively investigated in the causal inference literature \citep[e.g.,][]{Angrist:1996}, there appears to be very limited discussions about causal inference of ordinal outcomes in the presence of noncompliance. To the best of our knowledge, \cite{Cheng:2009} discussed various causal parameters under the assumptions of one-sided noncompliance, and \cite{Baker:2011} generalized her results to two-sided noncompliance; both of them assumed independent potential outcomes.

Under the Stable Unit Treatment Value Assumption, for unit $i$, let $\{D_i(1), D_i(0)\}$ be the potential values of treatment received under treatment and control; the observed treatment received is therefore
$
D_i^\textrm{obs} = Z_iD_i(1)+(1-Z_i)D_i(0).
$
\citet{Angrist:1996} proposed to classify the units into four categories according to the joint values of $D_i(1)$ and $D_i(0):$ 
\begin{equation}\label{eq:principal-strata}
G_i = 
\begin{cases}
a, & \text{if } D_i(1)=1, D_i(0) = 1, \\
c, & \text{if } D_i(1)=1, D_i(0) = 0, \\
d, & \text{if } D_i(1)=0, D_i(0) = 1, \\
n, & \text{if } D_i(1)=0, D_i(0) = 0,
\end{cases}
\end{equation}
and referred to the subgroups defined in \eqref{eq:principal-strata} as always-takers ($a$), compliers ($c$), defiers ($d$) and never-takers ($n$). Let
$
\pi_g = \mathrm{pr}\left(G = g\right)
$
denote the probability of the stratum $g \in \{a, c, d, n \},$ and
\begin{equation*}
g_{kl} = \mathrm{pr} \left\{ Y(1) = k, Y(0) = l \mid G = g\right\}
\end{equation*}
be the probability of potential outcome $k$ under treatment and potential outcome $l$ under control within stratum $g.$ The $J\times J$ probability matrix $\left\{ g_{kl} \right\}_{0 \leq k,l \leq J-1}$ summarizes the joint distribution of the potential outcomes for stratum $g.$ Define
\begin{equation}\label{eq:marginal-probability}
g_{k+} = \sum_{l^\prime=0}^{J-1} g_{kl^\prime},
\quad
g_{+l} = \sum_{k^\prime =0}^{J-1} g_{k^\prime l}
\quad
(k, l = 0, 1, \ldots, J-1);
\end{equation}
the vectors $\left( g_{0+}, \ldots, g_{J-1, +} \right)^\textrm{T}$ and $\left( g_{+0}, \ldots, g_{+, J-1} \right)^\textrm{T}$ characterize the marginal distributions of the potential outcomes under treatment and control. By the law of total probability, 
\begin{equation}\label{eq:lotp}
p_{kl} = \sum_g \pi_g g_{kl},
\quad
p_{k+} = \sum_g \pi_g g_{k+},
\quad
p_{+l} = \sum_g \pi_g g_{+l}.
\end{equation}
We define the subgroup causal parameters within stratum $g$ as
\begin{equation*}
\tau_g = \mathrm{pr} \left\{ Y_i (1) \ge Y_i (0) \mid G = g \right\} = \mathop{\sum\sum}_{ k \ge  l} g_{kl},
\quad
\eta_g = \mathrm{pr} \left\{ Y_i (1) > Y_i (0) \mid G = g  \right\} = \mathop{\sum\sum}_{ k >  l} g_{kl}.
\end{equation*}

Following \citet{Angrist:1996}, we invoke the following ``standard'' assumptions: (1) complete randomization, i.e., $Z\ind \{ D(1), D(0) , Y(1), Y(0), \bm X \};$ (2) monotonicity, i.e., $D_i(1) \geq D_i(0)$ for all $i;$ (3) exclusion restriction, i.e., $D_i(1) = D_i(0)$ implies $Y_i(1) = Y_i(0).$ Monotonicity rules out the defiers with $G = d,$ and strong monotonicity further rules out the always-takers with $G = a.$ Exclusion restriction implies that $\tau_n = 1, \eta_n = 0, \tau_a = 1$ and $\eta_a = 0.$ Therefore, we discuss only the causal effects for the compliers, i.e., $\tau_c$ and $\eta_c.$

\subsection{Bounds on the causal effects for compliers}

We focus only on the monotonicity assumption, because it is more general than strong monotonicity. Under monotonicity and exclusion restriction, we identify the probabilities of always-takers, compliers and never-takers, i.e., $(\pi_a,\pi_c,\pi_n),$ and the distributions of the potential outcomes conditional on $G$ \citep{Angrist:1996, Cheng:2009, Baker:2011}, i.e., the $g_{k+}$'s and $g_{+l}$'s. Below, we establish the relationships between the causal parameters $\tau$ and $\tau_c,$ and between $\eta$ and $\eta_c.$
\begin{proposition}\label{lemma:mono-er}
$
\tau_c = \tau / \pi_c - \left( 1-\pi_c \right) / \pi_c
$
and
$
\eta_c = \eta / \pi_c.
$
\end{proposition}

Therefore, we can plug in the upper and lower bounds of $\tau$ and $\eta$ to obtain the bounds of $\tau_c$ and $\eta_c$, using the relationships in Proposition \ref{lemma:mono-er}. However, these bounds are not sharp, and the following bounds, implied by Propositions \ref{thm:tau} and \ref{thm:eta}, are narrower.

\begin{corollary}\label{coro:bounds-noncomp-general}
The sharp lower and upper bounds of $\tau_c$ are
\begin{equation*}
\tau_{c,L} = \max\limits_{0 \le j \le J-1} \left( c_{+ j} + \Delta_{c,j}\right),
\quad
\tau_{c,U} = 1 + \min\limits_{0 \le j \le J-1} \Delta_{c,j},
\end{equation*}
and the sharp lower and upper bounds of $\eta_c$ are
\begin{equation*}
\eta_{c,L} = \max\limits_{0 \le j \le J-1} \Delta_{c,j},
\quad
\eta_{c,U} = 1 + \min_{0 \le j \le J-1} \left( \Delta_{c,j} - c_{j +} \right).
\end{equation*}
\end{corollary}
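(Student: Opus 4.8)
The plan is to recognize that this corollary is simply Theorems \ref{thm:tau} and \ref{thm:eta} applied to the subpopulation of compliers, once we identify which quantities are pinned down by the data. First I would recall that under monotonicity and exclusion restriction the observed joint law of $(Z, D^{\textrm{obs}}, Y^{\textrm{obs}})$ identifies the stratum proportions $(\pi_a, \pi_c, \pi_n)$ together with the marginal distributions $c_{k+} = \mathrm{pr}\{Y(1) = k \mid G = c\}$ and $c_{+l} = \mathrm{pr}\{Y(0) = l \mid G = c\}$ of the complier potential outcomes \citep{Angrist:1996, Cheng:2009, Baker:2011}, and hence the complier distributional causal effects $\Delta_{c,j} = \sum_{k \ge j} c_{k+} - \sum_{l \ge j} c_{+l}$. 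The parameters $\tau_c = \mathop{\sum\sum}_{k \ge l} c_{kl}$ and $\eta_c = \mathop{\sum\sum}_{k > l} c_{kl}$, by contrast, depend on the complier joint matrix $\{c_{kl}\}$, which is constrained only by the marginal identities $\sum_{l'} c_{kl'} = c_{k+}$, $\sum_{k'} c_{k'l} = c_{+l}$, and $c_{kl} \ge 0$, i.e., exactly the constraint system \eqref{eq:linear-constraint} with $\bm p_1, \bm p_0$ replaced by the complier marginals.

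Given this reduction, the optimization problem defining the sharp bounds of $\tau_c$ (resp.\ $\eta_c$) is literally the linear program solved in Theorem \ref{thm:tau} (resp.\ Theorem \ref{thm:eta}) under the substitutions $p_{k+} \mapsto c_{k+}$, $p_{+l} \mapsto c_{+l}$, and $\Delta_j \mapsto \Delta_{c,j}$. So I would invoke those theorems directly to read off $\tau_{c,L} = \max_j (c_{+j} + \Delta_{c,j})$, $\tau_{c,U} = 1 + \min_j \Delta_{c,j}$, $\eta_{c,L} = \max_j \Delta_{c,j}$, and $\eta_{c,U} = 1 + \min_j (\Delta_{c,j} - c_{j+})$.

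The one point requiring care — and what I expect to be the real content — is sharpness: I must check that the extremal complier joint matrices $\{c_{kl}\}$ constructed in the proofs of Theorems \ref{thm:tau} and \ref{thm:eta} actually arise from a scenario compatible with the observed data, so the bounds are attained and not merely valid. This holds because, for compliers, the data reveal only one of the two potential outcomes ($Y(1)$ under assignment to treatment, $Y(0)$ under assignment to control), so $\{c_{kl}\}$ is unconstrained beyond its own margins; meanwhile exclusion restriction forces the always-taker and never-taker joint matrices to be diagonal, $a_{kl} = a_{k+}\mathbbm{1}(k=l)$ and $n_{kl} = n_{k+}\mathbbm{1}(k=l)$, hence fixed. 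Combining any extremal $\{c_{kl}\}$ with these fixed diagonal matrices through the law of total probability \eqref{eq:lotp} produces an overall probability matrix $\bm P$ consistent with the identified marginals, certifying that the bounds in the corollary are achieved. As a consistency check I would also remark that these bounds are no wider than — and generically strictly narrower than — those obtained by substituting the bounds of $\tau$ and $\eta$ from Theorems \ref{thm:tau}--\ref{thm:eta} into the relations of Theorem \ref{lemma:mono-er}, since the latter first relaxes the complier marginal constraints before re-tightening.
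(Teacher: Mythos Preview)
Your proposal is correct and takes the same approach as the paper: the paper's own proof is the single sentence ``The proof follows directly from Theorems 1 and 2,'' which is exactly your reduction to the complier subpopulation with the substitutions $p_{k+}\mapsto c_{k+}$, $p_{+l}\mapsto c_{+l}$, $\Delta_j\mapsto\Delta_{c,j}$. Your additional discussion of sharpness---that any extremal complier joint matrix can be paired with the diagonal always-taker and never-taker matrices forced by exclusion restriction to produce a full $\bm P$ consistent with the identified quantities---is a welcome elaboration the paper leaves implicit.
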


Similar to Section \ref{sec:covariate}, we can use covariates to sharpen the bounds of $\tau_c.$ Within each level of the pretreatment covariates $\bm X = \bm x,$ we define the conditional probabilities that the treatment is beneficial for compliers as
\begin{equation*}
\tau_c (\bm x )  = \mathrm{pr} \{ Y(1) \geq Y(0)\mid G = c, \bm X = \bm x \},
\end{equation*}
and obtain their conditional sharp upper and lower bounds
$
\tau_{c,L} ( \bm x )
$
and
$
\tau_{c,U} ( \bm x ).
$
Because
\begin{equation*}
\tau_c = \frac{ \int \tau_c\left(\bm x\right) \pi_c\left( \bm x \right) dF\left( \bm x \right) } { \int \pi_c\left(\bm x \right) dF\left( \bm x \right) },
\end{equation*}
the bounds for $\tau_c$ become
\begin{equation}\label{eq:covariate-noncomp}
\tau^\prime_{c,L} = \frac{ \int \tau_{c, L}\left(\bm x\right) \pi_c\left( \bm x \right) dF\left( \bm x \right) } { \int \pi_c\left(\bm x \right) dF\left( \bm x \right) },
\quad
\tau^\prime_{c,U} = \frac{ \int \tau_{c, U}\left(\bm x\right) \pi_c\left( \bm x \right) dF\left( \bm x \right) } { \int \pi_c\left(\bm x \right) dF\left( \bm x \right) }.
\end{equation}
Similar to Proposition \ref{thm:sharper}, the adjusted bounds are tighter, i.e., $\tau_{c,L} \le \tau_{c,L}^\prime \le \tau_{c,U}^\prime \le \tau_{c,U}.$ 

\subsection{Using noncompliance to sharpen bounds for the whole population}

Proposition \ref{lemma:mono-er} and Corollary \ref{coro:bounds-noncomp-general} imply two new sets of bounds for $\tau$ and $\eta,$ which are tighter than those in Propositions \ref{thm:tau} and \ref{thm:eta}.
\begin{corollary}\label{corollary:bounds-tighter}
Under monotonicity and exclusion restriction, we can bound $\tau$ from below and above using
\begin{equation*}
\tau_L'' = \pi_c\tau_{c,L} + 1 - \pi_c,
\quad
\tau_U'' = \pi_c\tau_{c,U} + 1 - \pi_c,
\end{equation*}
and bound $\eta$ from below and above using
\begin{equation*}
\eta_L'' = \pi_c\eta_{c,L},
\quad
\eta_U'' = \pi_c\eta_{c,U}.
\end{equation*}
These new bounds above are narrower than those in Propositions \ref{thm:tau} and \ref{thm:eta}, because they satisfy 
$\tau_L \le \tau_L'', \tau_U = \tau_U'', \eta_L = \eta_L''$, and $\eta_U \ge \eta_U''$.
\end{corollary}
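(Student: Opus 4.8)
The plan is to read the bounds $\tau_L'',\tau_U'',\eta_L'',\eta_U''$ off Theorem \ref{lemma:mono-er} and Corollary \ref{coro:bounds-noncomp-general}, verify that they are valid bounds, and then compare them termwise with those of Theorems \ref{thm:tau} and \ref{thm:eta}. Validity is immediate: Theorem \ref{lemma:mono-er} rearranges to $\tau=\pi_c\tau_c+(1-\pi_c)$ and $\eta=\pi_c\eta_c$, so substituting the sharp complier bounds $\tau_{c,L}\le\tau_c\le\tau_{c,U}$ and $\eta_{c,L}\le\eta_c\le\eta_{c,U}$ and using $\pi_c\ge0$ gives $\tau_L''\le\tau\le\tau_U''$ and $\eta_L''\le\eta\le\eta_U''$.

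The substance is a structural identity for the marginal quantities. Under monotonicity there are no defiers, so $\pi_a+\pi_c+\pi_n=1$; exclusion restriction forces $Y_i(1)=Y_i(0)$ for every always-taker and never-taker, so within strata $a$ and $n$ the conditional joint law of the potential outcomes is supported on the diagonal and hence $a_{k+}=a_{+k}$ and $n_{k+}=n_{+k}$ for all $k$. Feeding the law of total probability \eqref{eq:lotp} into the definition \eqref{eq:DCE} of $\Delta_j$, the always-taker and never-taker contributions cancel, leaving
$$
\Delta_j=\pi_c\Delta_{c,j}\qquad(j=0,\ldots,J-1),
$$
and the same substitution yields $p_{+j}=\pi_a a_{j+}+\pi_c c_{+j}+\pi_n n_{j+}$ and $p_{j+}=\pi_a a_{j+}+\pi_c c_{j+}+\pi_n n_{j+}$.

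With these identities the four comparisons are elementary. Pulling the nonnegative factor $\pi_c$ out of the optimization, $\tau_U=1+\min_j\Delta_j=1+\pi_c\min_j\Delta_{c,j}=\pi_c\tau_{c,U}+1-\pi_c=\tau_U''$ and $\eta_L=\max_j\Delta_j=\pi_c\max_j\Delta_{c,j}=\pi_c\eta_{c,L}=\eta_L''$. For the lower bound of $\tau$, the decomposition together with $a_{j+}\le1$ and $n_{j+}\le1$ gives $p_{+j}+\Delta_j=\pi_c(c_{+j}+\Delta_{c,j})+\pi_a a_{j+}+\pi_n n_{j+}\le\pi_c(c_{+j}+\Delta_{c,j})+(1-\pi_c)$ for every $j$, whence $\tau_L=\max_j(p_{+j}+\Delta_j)\le\pi_c\max_j(c_{+j}+\Delta_{c,j})+1-\pi_c=\tau_L''$. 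Symmetrically, $\Delta_j-p_{j+}=\pi_c(\Delta_{c,j}-c_{j+})-\pi_a a_{j+}-\pi_n n_{j+}\ge\pi_c(\Delta_{c,j}-c_{j+})-(1-\pi_c)$ for every $j$, whence $\eta_U=1+\min_j(\Delta_j-p_{j+})\ge1+\pi_c\min_j(\Delta_{c,j}-c_{j+})-(1-\pi_c)=\pi_c\eta_{c,U}=\eta_U''$.

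I expect the only delicate point to be the structural identity: one must use that exclusion restriction equates the entire marginal distributions of always-takers and never-takers across the two arms, not merely their stratum probabilities, so that their contributions to $\Delta_j$ cancel exactly. Once $\Delta_j=\pi_c\Delta_{c,j}$ and the decompositions of $p_{+j}$ and $p_{j+}$ are in place, the remainder is routine manipulation of $\min$ and $\max$.
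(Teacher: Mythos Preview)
Your proposal is correct and follows essentially the same route as the paper: both hinge on the identity $\Delta_j=\pi_c\Delta_{c,j}$ (from exclusion restriction making the always-taker and never-taker marginals coincide across arms), then compare the bounds termwise using $\pi_a a_{j+}+\pi_n n_{j+}\le 1-\pi_c$. Your write-up is in fact more explicit than the paper's, which leaves the $\eta$ case to ``similar arguments.''
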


There are two reasons that we can obtain tighter bounds. First, we use the partially observed variable $G$ as a pretreatment variable. Second, the monotonicity and exclusion restriction assumptions further restrict the probability structure of the potential outcomes.

\section{Statistical Inference of the Bounds}\label{sec:inf}

\subsection{Point estimation}

In practice, we need to use the observed data to estimate the marginal probabilities of the potential outcomes and the bounds. To save space for the main text, we discuss only the bounds of $\tau$ and $\tau_c.$ We describe the point estimation procedures for the three scenarios mentioned in the previous sections -- completely randomized experiments with or without noncompliance, and unconfounded observational studies.

First, we consider completely randomized experiments without noncompliance. To estimate the unadjusted bounds, we replace $p_{k+}$ and $p_{+l}$ in Proposition \ref{thm:tau} with their sample analogues
\begin{equation*}
\widehat p_{k+} = N^{-1} \sum_{i=1}^N Z_i 1(Y_i^\textrm{obs} = k),
\quad
\widehat p_{+l} = N^{-1} \sum_{i=1}^N (1-Z_i) 1(Y_i^\textrm{obs} = l).
\end{equation*}
To estimate the covariate adjusted bounds in \eqref{eq:covariate}, we first estimate the marginal probabilities of the potential outcomes of unit $i$ given covariates $\bm x_i.$ Following \cite{Imbens:2015}, for low dimensional and discrete covariates, we can still use sample analogues. For high dimensional and continuous covariates, we can invoke parametric models such as proportional odds models. We then use the estimates, denoted as $\widehat p_{k+} \left( \bm x_i \right)$ and $\widehat p_{+l} \left( \bm x_i \right)$ respectively, to estimate the sharp lower and upper bounds of $\tau \left(\bm x_i \right),$ denoted as $\widehat \tau_L\left(\bm x_i \right)$ and $\widehat \tau_U\left(\bm x_i \right)$ respectively. Consequently, the estimated adjusted bounds of $\tau$ are
\begin{equation*}
\widehat \tau^\prime_L = N^{-1} \sum_{i=1}^N \widehat \tau_L\left(\bm x_i \right),
\quad
\widehat \tau^\prime_U = N^{-1} \sum_{i=1}^N \widehat \tau_U\left(\bm x_i \right).
\end{equation*}

Second, we consider unconfounded observational studies. If we have propensity score estimator $\widehat e (\bm x_i)$ for unit $i$, then we can estimate the marginal probabilities by
\begin{equation*}
\widehat p_{k+} = N^{-1} \sum_{i=1}^N Z_i \frac{1(Y_i^\textrm{obs} = k) } {\widehat e(X_i)},
\quad
\widehat p_{+l} = N^{-1} \sum_{i=1}^N (1-Z_i) \frac{1(Y_i^\textrm{obs} = l)}{1 - \widehat e(X_i)},
\end{equation*}
and then estimate the bounds accordingly.

Third, we consider completely randomized experiments with noncompliance. Without covariates, we use the EM algorithm \citep{Dempster:1977} to estimate $\pi_c,$ $c_{k+}$ and $c_{+l},$ and then estimate the unadjusted bounds in Corollary \ref{coro:bounds-noncomp-general}. For a more detailed description of the EM algorithm, see \cite{Baker:2011}. With covariates, we need to invoke parametric models for $G$ (e.g., multinomial logistic model given $\bm{X}$) and the marginal probabilities of the potential outcomes, and use the EM algorithm to compute the maximum likelihood of the model parameters. For more details, see the Supplementary Material, and \cite{Zhang:2009} and \cite{Frumento:2012}. After obtaining the sample analogues of $\tau_{c,L}(\bm{x}), \tau_{c,U}(\bm{x})$ and $\pi_c(\bm{x})$, we estimate the covariate adjusted bounds defined in \eqref{eq:covariate-noncomp} using a plug-in approach.

\subsection{Finite-sample bias and bias correction}
\label{sec:bias-correction}

As pointed out by several researchers \citep[e.g.,][]{Liu:1993, Manski:2000, Manski:2009}, the minimum and maximum operators in the closed-form expressions of the sharp bounds usually complicate the estimation procedure, by introducing finite-sample biases to the corresponding plug-in estimators. For example, even with unbiased estimators of the marginal probabilities (e.g., in completely randomized experiments), the estimated lower bound is positively biased. 
{\color{black} In the existing literature, this non-smoothness induced bias has been recognized and discussed by \cite{Laber:2011}, \cite{Hirano:2012} and \cite{Luedtke:2016}, under various settings.
}However, fortunately, such biases tend to diminish as the sample size increases, due to the consistency of the plug-in estimators. More importantly, as pointed out by \cite{Kreider:2007}, it is possible to effectively reduce such biases by a nonparametric bootstrap correction \citep{Parr:1983, Efron:1994}. To be more specific, let $\hat \tau_L$ denote the point estimator of $\tau_L,$ and the corresponding bias-corrected estimator is therefore $2 \hat \tau_L - \mathrm{E}_B(\hat \tau_L),$ where ``$\mathrm{E}_B$'' denotes the expectation induced by the bootstrap distribution.

The following numerical example demonstrates the magnitude of the bias associated with the plug-in estimator, and the performance of the bias-correction estimator.

\begin{example}
\label{example:bootstrap-bias-correction}
Consider a completely randomized experiment without noncompliance. To save space, we focus only on $\tau$ and its unadjusted lower bound $\tau_L$ in \eqref{eq:tau}. We choose the sample size $N \in \{ 100, 200, 500 \},$ and consider four different probability matrices. Cases 1--2 correspond to matrices $\bm P_2$ and $\bm P_3$ in \eqref{eq:joint-example3}, i.e., the independent and positively associated potential outcomes, which share the same marginal distribution but do not satisfy the stochastic dominance assumption. Cases 3--4 correspond to matrices $\bm P_5$ and $\bm P_6$ in \eqref{eq:joint-example4}, i.e., the independent and positively associated potential outcomes, which share the same marginal distribution and satisfy the stochastic dominance assumption. Columns 3--4 of Table \ref{table:demo} summarize the true values of $\tau$ and $\tau_L$ for all four cases.

For each case and fixed sample size, we independently draw $1000$ treatment assignments from a balanced completely randomized experiment. For each observed dataset, we calculate point estimates of $\tau_L,$ using the plug-in estimator and the bias-correction estimator based on $200$ bootstraps. In columns 5--6 of Table \ref{table:demo}, we report the biases of the two point estimators, from which we can draw two conclusions. First, for each case the bias of the plug-in estimator decreases as the sample size increases. Second, the bias-corrected estimator greatly reduces (in most cases by over 60\%) the bias of the plug-in estimator.

\begin{table}[ht]
\centering
\caption{Numerical examples. The first four columns contain the case label, sample size, and true values of $\tau$ and $\tau_L.$ The last two columns contain the biases of the plug-in (labeled ``p'') estimator and the bootstrap bias-corrected (labeled ``b'') estimators, calculated by 1000 repeat samplings and 200 bootstraps for each sample.}
\label{table:demo}
\begin{tabular}{cccccccccc}
\hline
Case & $N$  & $\tau$   &  $\tau_L$  & bias$_{\mathrm{p}}$  & bias$_{\mathrm{b}}$ \\
        \hline
  1 & 100 & 0.640 & 0.400 & 0.023 & 0.005 \\ 
  1 & 200 & 0.640 & 0.400 & 0.016 & 0.004 \\ 
  \vspace{.75mm}\noindent
  1 & 500 & 0.640 & 0.400 & 0.009 & 0.001 \\ 
  2 & 100 & 0.800 & 0.400 & 0.017 & -0.002 \\ 
  2 & 200 & 0.800 & 0.400 & 0.014 & 0.001 \\ 
  \vspace{.75mm}\noindent
  2 & 500 & 0.800 & 0.400 & 0.007 & -0.001 \\ 
  3 & 100 & 0.880 & 0.600 & 0.037 & 0.010 \\ 
  3 & 200 & 0.880 & 0.600 & 0.026 & 0.007 \\ 
  \vspace{.75mm}\noindent
  3 & 500 & 0.880 & 0.600 & 0.016 & 0.004 \\ 
  4 & 100 & 1.000 & 0.600 & 0.036 & 0.009 \\ 
  4 & 200 & 1.000 & 0.600 & 0.026 & 0.007 \\ 
  4 & 500 & 1.000 & 0.600 & 0.013 & 0.001 \\ 
  \hline
\end{tabular}
\end{table}
\end{example}

\subsection{Confidence intervals}
\label{sec:confidence-interval}

We discuss the construction of confidence intervals (CI) for the aforementioned causal parameters and their unadjusted or covarite adjusted bounds. For illustration, we again use $\tau$ as an example. From a practical (e.g., decision making) perspective, we aim to construct a confidence interval that covers $\tau$ at least $100(1-\alpha)$\% of the times, for pre-specified significance level $\alpha.$ Because the casual parameter is only partially identifiable, it is difficult to do so directly without additional assumptions or information. A common approach to address this challenge is to instead construct a $100(1-\alpha)$\% confidence interval for the sharp bounds $[\tau_L, \tau_U]$. Because $\tau \in [\tau_L, \tau_U],$ the results interval automatically guarantees at least $100(1-\alpha)$\% coverage rate for $\tau$ itself.

Similarly as in the point estimation procedure, because both the upper and the lower bounds involve the maximum and minimum operators, their asymptotic distributions become non-normal, rendering the construction of confidence intervals covering the bounds extremely challenging \citep{Hirano:2012}. Consequently, in practice, statisticians \citep{Cheng:2006, Yang:2016} often employed bootstrap methods \citep[e.g.,][]{Beran:1988, Beran:1990, Bickel:1997, Bickel:2008} to construct confidence intervals for partially identified parameters. Among numerous proposals, the most conceptually straightforward and transparent one is arguably the ``standard'' bootstrap procedure advocated by \cite{Horowitz:2000}, for which $1-\alpha$ confidence interval is simply $\{ \hat \tau_L - z_B(\alpha), \hat \tau_U + z_B(\alpha) \},$ where the threshold value $z_B(\alpha)$ can be obtained by solving the equation
\begin{equation*}
\mathrm{Pr}_B \{
\hat \tau_L - z_B(\alpha) \le \hat \tau_L, \hat \tau_U \le \hat \tau_U + z_B(\alpha)
\} = 1 - \alpha.
\end{equation*}
In the above equation, ``$\mathrm{Pr}_B$'' is the probability measure induced by bootstrap. Recently, several researchers \citep[e.g.,][]{Romano:2008, Romano:2010, Chernozhukov:2013} proposed more delicate methods to construct confidence intervals for partially identified parameters. Although the theoretical guarantees of the classic bootstrapped confidence intervals \citep{Horowitz:2000} are not completely established, several researchers \citep[e.g.,][]{Fan:2010, Yang:2014} have evaluated them via extensive simulation studies, and found that they achieve nominal coverage rates in general.

To empirically illustrate the validity of our inferential procedure, in Appendix C we compare \cite{Horowitz:2000}'s method to a more theoretically rigorous one, under a wide range of settings. The results suggest that, at least in our context \cite{Horowitz:2000}'s bootstrap interval performs equally well, if not slightly more ``conservative.'' Therefore, for simplicity in simulations and transparency in applications, we still use bootstrap to construct confidence intervals. We provide the code to implement the above construction approach; more sophisticated users can straightforwardly modify our code and explore more advanced methods.

\section{Simulation Studies}\label{sec:simu}

\subsection{Without noncompliance}

To save space for the main text, we focus only on $\tau$ and its sharp bounds in Proposition \ref{thm:tau}. For illustration, we first adopt the settings (i.e., sample sizes and probability matrices) in Example \ref{example:bootstrap-bias-correction}. It is worth mentioning that, for Cases 1 and 3 with independent potential outcomes, $\tau_L < \tau < \tau_U.$ For Cases 2 and 4 with positively associated potential outcomes, $\tau=\tau_U.$ In addition, by symmetry Cases 1--4 only consider $\tau > 0.5.$

For each case, we independently draw $1000$ treatment assignments from a balanced completely randomized experiment. For each observed dataset, we obtain bias-corrected estimates of $\tau_L$ and $\tau_U,$ and construct a $95\%$ confidence interval for $(\tau_L, \tau_U),$ using 200 bootstrapped samples. In Columns 5--8 of Table \ref{table:numerical}, we report the biases and standard errors of the point estimators $\widehat \tau_L$ and $\widehat \tau_U$; in Column 9, we report the coverage rates of the confidence intervals on the bounds $(\tau_L, \tau_U),$ and $\tau$ itself. We can draw several conclusions from the simulation results. First, the point estimators have small biases and standard errors. Second, the confidence intervals achieve reasonable coverage rates for the bounds $(\tau_L, \tau_U),$ although always over-cover $\tau,$ especially in cases with independent potential outcomes.

\begin{table}[htbp]
\centering
\caption{\small Simulated examples without noncompliance. The first five columns contain the case number, sample size, and true values of the parameter and its sharp lower and upper bounds. The next four columns contain the biases and standard errors of the point estimators of the bounds, and the last two columns contain the coverage properties of the confidence intervals for the bounds (labeled ``$\mathrm{coverage}_1$'') and the true parameter itself (labeled ``$\mathrm{coverage}_2$'').}
\label{table:numerical}
\begin{tabular}{ccccccccccc}
\hline
        Case & $N$  & $\tau$   &  $\tau_L$  &  $\tau_U$ & bias$_L$    &  se$_L$ &  bias$_U$ &  se$_U$ & coverage$_1$ & coverage$_2$\\
        \hline
  1 & 100 & 0.640 & 0.400 & 0.800 & 0.005 & 0.056 & 0.001 & 0.067 & 0.989 & 1.000 \\ 
  1 & 200 & 0.640 & 0.400 & 0.800 & 0.004 & 0.040 & -0.000 & 0.044 & 0.989 & 1.000 \\ 
  \vspace{.75mm}\noindent
  1 & 500 & 0.640 & 0.400 & 0.800 & 0.001 & 0.025 & -0.003 & 0.029 & 0.982 & 1.000 \\ 
  2 & 100 & 0.800 & 0.400 & 0.800 & -0.002 & 0.063 & -0.001 & 0.082 & 0.969 & 0.979 \\ 
  2 & 200 & 0.800 & 0.400 & 0.800 & 0.001 & 0.044 & -0.000 & 0.057 & 0.966 & 0.976 \\ 
  \vspace{.75mm}\noindent
  2 & 500 & 0.800 & 0.400 & 0.800 & -0.001 & 0.027 & -0.002 & 0.035 & 0.968 & 0.979 \\ 
  3 & 100 & 0.880 & 0.600 & 1.000 & 0.010 & 0.049 & 0.000 & 0.000 & 0.959 & 1.000 \\ 
  3 & 200 & 0.880 & 0.600 & 1.000 & 0.007 & 0.035 & 0.000 & 0.000 & 0.965 & 1.000 \\ 
  \vspace{.75mm}\noindent
  3 & 500 & 0.880 & 0.600 & 1.000 & 0.004 & 0.022 & 0.000 & 0.000 & 0.969 & 1.000 \\ 
  4 & 100 & 1.000 & 0.600 & 1.000 & 0.009 & 0.053 & 0.000 & 0.000 & 0.940 & 1.000 \\ 
  4 & 200 & 1.000 & 0.600 & 1.000 & 0.007 & 0.035 & 0.000 & 0.000 & 0.967 & 1.000 \\ 
  4 & 500 & 1.000 & 0.600 & 1.000 & 0.001 & 0.021 & 0.000 & 0.000 & 0.983 & 1.000 \\ 
  \hline
\end{tabular}
\end{table}

As mentioned previously, in Appendix C we conduct additional simulation studies to further examine the performance of \cite{Horowitz:2000}'s bootstrap confidence interval. The simulation results suggest that it achieves nearly nominal coverage rates for the bounds $(\tau_L, \tau_U),$ except for certain ``edge cases'' (e.g., when $\tau \approx \tau_U \approx 1$), and as expected usually over-cover $\tau.$

\subsection{With noncompliance}

To evaluate the finite sample performances of the estimators and the confidence intervals of the bounds, we conduct simulation studies under different model specifications. To save space, we focus only on the parameter $\tau_c,$ and consider six simulation cases. Cases 1--3 are indexed by the parameter  $\beta \in \{ 1,1/2,0\},$ and Cases 4--6 by $\xi \in \{ 1,1/2,0\}.$ We postpone the interpretations of $\beta$ and $\xi$ until afterwards. For each case, let the pre-treatment covariates $\bm X = \left(1, X_1, X_2 \right),$ where $X_1 \sim N(0, 1),$ and $X_2 \sim \mathrm{Bern}\left( 1/2 \right).$ For fixed $\bm X = \bm x,$ we generate the  variable $G$ from a multinomial logit model
\begin{equation*}
\pi_g \left( \bm x\right) = 
\mathrm{exp}(\bm \eta_g^\mathrm{T} \bm x)
\big/ 
\left\{ \sum_{g^\prime} \mathrm{exp}(\bm \eta_{g^\prime}^\mathrm{T} \bm x) \right\}
\quad
(g = a, c, n),
\end{equation*}
where
$
\bm \eta_c = \bm 0,
$
$
\bm \eta_a = \left( 1/2, 1, 0 \right)
$
and
$
\bm \eta_n = \left( -1/2, 1, 0 \right).
$
We generate the potential outcomes from proportional odds models.
\begin{enumerate}
\item For always-takers, let $Y_i(1) = Y_i(0),$ and their marginal distributions be
\begin{equation*}
\mathrm{logit} \left\{ \sum_{k\le j}a_{k+}\left(\bm x\right) \right\}
= \mathrm{logit} \left\{ \sum_{l\le j}a_{+l}\left(\bm x\right) \right\}
= \alpha_{a,j}  - 2x_1,
\end{equation*}
where
$
\alpha_{a,0} = -1/2
$
and
$
\alpha_{a,1} = 1.
$
\item For never-takers let $Y_i(1) = Y_i(0),$ and their marginal distributions be
\begin{equation*}
\mathrm{logit}  \left\{  \sum_{k\le j}n_{k+}\left(\bm x\right) \right\}
= \mathrm{logit}  \left\{  \sum_{l\le j}n_{+l}\left(\bm x\right) \right\}
= \alpha_{n,j},
\end{equation*}
where
$
\alpha_{a,0} = -3/2
$
and
$
\alpha_{a,1} = 0.
$
\item For compliers let $Y_i(1)$ and $Y_i(0)$ be independent, and the values of the parameters be
$
\alpha_{c,0} = -1,
$
$
\alpha_{c,1} = 1/2,
$
$
\gamma_{c,0} = 1/2
$
and
$
\gamma_{c,1} = 2.
$
\begin{enumerate}
\item For Cases 1--3, let the marginal distributions be
\begin{equation*}
\mathrm{logit} \left\{ \sum_{k\le j}c_{k+}\left(\bm x\right) \right\}
= \alpha_{c,j} - 2 \beta x_1,
\quad
\mathrm{logit} \left\{ \sum_{l\le j}c_{+l}\left(\bm x\right) \right\}
= \gamma_{c,j} + \beta x_1;
\end{equation*}
\item For Cases 4--6, let the marginals distributions be
\begin{equation*}
\mathrm{logit} \left\{ \sum_{k\le j}c_{k+}\left(\bm x\right) \right\}
= \alpha_{c,j} - 2 x_1 - \xi x_2,
\quad
\mathrm{logit} \left\{ \sum_{l\le j}c_{+l}\left(\bm x\right) \right\}
= \gamma_{c,j} + x_1 + \xi x_2.
\end{equation*}
\end{enumerate}
\end{enumerate}

For the above six cases, their true values of $\tau_c,$ unadjusted and adjusted bounds are in columns 2--4 of each sub-table of Table \ref{table:numerical-noncomp}. For Cases 1--3, the parameter $\beta$ quantifies the association between the covariates and the potential outcomes. As $\beta$ decreases, the covariate adjusted bounds become closer to the unadjusted bounds. For Cases 4--6, the parameter $\xi$ quantifies the association between the binary covariate $X_2$ and the potential outcomes of compliers.

We conduct inference without the binary covariate $X_2.$ This does not affect Cases 1--3 because $X_2$ is irrelevant in the data generating process, but does affect Cases 4--6. We purposefully design the data generating process in this way, to examine the performance of our estimators under correct and incorrect model specifications. For each case, we choose the sample size to be 1000, and independently draw $1000$ treatment assignments from a balanced completely randomized experiment. For each observed dataset, based on based on 100 bootstrapped samples, we first obtain the bias-corrected estimates of $\tau_{c,L}$ and $\tau_{c,U},$ and construct a $95\%$ confidence interval for $( \tau_{c,L}, \tau_{c,U} )$; we then estimate the bounds $\tau^\prime_{c,L}$ and $\tau^\prime_{c,U},$ and construct a $95\%$ confidence interval for $( \tau^\prime_{c,L}, \tau^\prime_{c,U} ).$

We report the simulation results in Table~\ref{table:numerical-noncomp}, in which columns 4--7 of each sub-table include the biases of the point estimators, the average lengths and coverage rates of the $95\%$ confidence intervals on the bounds. First, the point estimators of the bounds have small biases. Second, when the pretreatment covariates are associated with the potential outcomes, the confidence intervals of the bounds $( \tau_{c,L}, \tau_{c,U} )$ are longer than those of $( \tau^\prime_{c,L}, \tau^\prime_{c,U} ),$ on average. Third, the confidence intervals for the bounds $( \tau_{c,L}, \tau_{c,U} )$ and $( \tau^\prime_{c,L}, \tau^\prime_{c,U} )$ achieve reasonable coverage rates. Fourth, the performance of the bounds is robust to the missingness of the binary covariate, or, equivalently, a mis-specification of the outcome models.

\begin{table}
\caption{\small Simulated examples with noncompliance. In each sub-table, the first three columns contain the true values of the causal parameter $\tau_c$ and its lower and upper bounds, the next two columns contain the biases of the point estimators of the lower and upper bounds, and the last two columns contain the lengths and coverage rates of the $95\%$ confidence intervals for the bounds.} 
\label{table:numerical-noncomp}

\begin{subtable}{1\textwidth}
\caption{unadjusted bounds}
\centering
   \begin{tabular}{cccccccccccccc}
\hline
  Case & $\tau_c$ &  $\tau_{c,L}$  &  $\tau_{c,U}$ & bias$_L$    &  bias$_U$  & length & coverage \\
  \hline
  1 & 0.685 & 0.488 & 0.971 & -0.003 & -0.005 & 0.659 & 0.945 \\ 
  2 & 0.770 & 0.553 & 1.000 & -0.008 & 0.006 & 0.574 & 0.973 \\ 
  \vspace{.75mm}\noindent
  3 & 0.856 & 0.622 & 1.000 & 0.013 & 0.001 & 0.489 & 0.966 \\ 
  4 & 0.782 & 0.589 & 1.000 & 0.000 & 0.006 & 0.523 & 0.957 \\ 
  5 & 0.736 & 0.540 & 1.000 & -0.003 & 0.003 & 0.593 & 0.975 \\ 
  6 & 0.686 & 0.488 & 0.970 & -0.001 & -0.004 & 0.655 & 0.945 \\ 
  \hline
\end{tabular}
\end{subtable}

\bigskip
\begin{subtable}{1\textwidth}
   \caption{adjusted bounds}
\centering
   \begin{tabular}{cccccccccccccc}
\hline
  Case & $\tau_c$ & $\tau^\prime_{c,L}$  &  $\tau^\prime_{c,U}$ &  bias$_L$ &  bias$_U$ & length & coverage \\
  \hline
  1 & 0.685 & 0.503 & 0.772 & -0.001 & 0.003 & 0.466 & 0.968 \\
  2 & 0.770 & 0.563 & 0.935 & -0.006 & 0.001 & 0.530 & 0.968 \\
  \vspace{.75mm}\noindent
  3 & 0.856 & 0.622 & 1.000 & 0.001 & 0.002 & 0.489 & 0.959 \\
  4 & 0.782 & 0.602 & 0.846 & -0.002 & 0.017 & 0.436 & 0.960 \\
  5 & 0.738 & 0.556 & 0.817 & -0.001 & 0.004 & 0.447 & 0.965 \\
  6 & 0.686 & 0.503 & 0.772 & 0.008 & -0.006 & 0.466 & 0.968 \\
  \hline
\end{tabular}
\end{subtable}
\end{table}

\section{Applications}\label{sec:example}

\subsection{A taste-testing experiment without noncompliance}
We use the taste-testing experiment data in \cite{Bradley:1962} to demonstrate the estimation and inference of the proposed causal parameters. The outcome of interest $Y$ is ordinal with five categories, from ``terrible'' with $Y=0$ to ``excellent'' with $Y=4.$ We consider only three treatments C, D, E, and summarize the data and results in Table \ref{table:real}. Because negative associated potential outcomes appear unlikely in practice \citep{Ding:2016}, i.e., the three treatments are not drastically different (e.g., $Y_i(C) = 4,$ $Y_i(E) = 0$), we focus on the interpretations of the cases with independent and positive correlated potential outcomes, e.g., $\tau_I$ and $\tau_U.$ First, treatment E stochastically dominates treatment C, and the confidence intervals for $(\tau_I, \tau_U)$ and $(\eta_I, \eta_U)$ are (0.914, 1.000) and (0.651, 0.997). The results suggest that treatment E is indeed better than treatment C, because both lower confidence limits are greater than 0.5. Second, although treatment E and treatment D do not stochastically dominate each other, the confidence intervals for $(\tau_I, \tau_U)$ and $(\eta_I, \eta_U)$ are (0.656, 0.982) and (0.510, 0.886), suggesting that treatment E is better than treatment D. Therefore the proposed causal parameters $\tau$ and $\eta$ are useful for decision making, especially when the stochastic dominance assumption does not hold.

\begin{table}[ht]
\caption{Analysis of a taste-testing experiment}
\label{table:real}
\begin{subtable}{1\textwidth}
\caption{Data from \cite{Bradley:1962}}
\label{table:Bradley}
\centering
\begin{tabular}{ccccccc}
\hline
& \multicolumn{5}{c}{Outcome Categories} \\
\hline
 treatment   & 0  & 1  & 2  & 3  & 4  & row sum\\
  C          & 14 & 13 & 6  & 7  & 0  & 40    \\
  D          & 11 & 15 & 3  & 5  & 8  & 42    \\
  E          & 0  & 2  & 10 & 30 & 2  & 44    \\
  \hline
\end{tabular}
\end{subtable}

\bigskip
\begin{subtable}{1\textwidth}
\caption{Results for $\tau:$ Point estimators and confidence intervals (CIs)}
\label{table:results-tau}
\centering
   \begin{tabular}{cccccc}
\hline
            & $\widehat\tau_L$ & $\widehat\tau_I$  &  $\widehat \tau_U$  & CI for $\left(\tau_L, \tau_U \right)$ & CI for $\left(\tau_I, \tau_U \right)$ \\
            \hline
  E vs C  & 0.765 & 0.946 & 1.000 & (0.667, 1.000) & (0.914, 1.000) \\
  E vs D   & 0.630 & 0.782 & 0.856 & (0.503, 0.997) & (0.656, 0.982) \\
  \hline
\end{tabular}
\end{subtable}

\bigskip
\begin{subtable}{1\textwidth}
   \caption{Results for $\eta:$ Point estimators and confidence intervals (CIs)}
   \label{table:results-eta}
\centering
   \begin{tabular}{cccccc}
\hline
           & $\widehat \eta_L$ & $\widehat \eta_I$  &  $\widehat \eta_U$  & CI for $\left(\eta_L, \eta_U \right)$  & CI for $\left(\eta_I, \eta_U \right)$ \\
            \hline
  E vs C  & 0.623 & 0.780 & 0.870 & (0.480, 1.000) & (0.651, 0.997) \\
  E vs D  & 0.573 & 0.659 & 0.738 & (0.413, 0.896) & (0.510, 0.886) \\
  \hline
\end{tabular}
\end{subtable}
\end{table}

\subsection{A sexual assault education program without noncompliance}
Between September 2011 and February 2013, three universities in Canada (Windsor, Guelph and Calgary) conducted the Sexual Assault Resistance Education (SARE) Trial. The SARE trial investigates whether the enhanced Assess, Acknowledge and Act (AAA) program, which consist of numerous activities (e.g., lectures, discussions and practices), can help prevent sexual assaults. 451 first-year female students from the above universities where randomly assigned to the treatment group ($Z = 1$) with access to AAA, and 442 were randomly assigned to the control group ($Z = 0$) with brochures containing general information on sexual assault. The primary outcome $Y$ is ordinal with six categories, from ``complete rape'' with $Y=0$ to ``no reporting of any non-consensual sexual contact'' with $Y=5.$

We summarize the data and results in Table \ref{table:sare}. Because both the treatment and control groups receives useful information on sexual assault prevention, negatively associated potential outcomes seem unlikely. Therefore, we again focus on independent and positively correlated potential outcomes. The confidence intervals for $(\tau_I, \tau_U)$ and $(\eta_I, \eta_U)$ are (0.758, 1.000) and (0.554, 0.999), suggesting that AAA is indeed beneficial, because both lower confidence limits are greater than 0.5. Our findings corroborate the recommendations by \cite{Senn:2015}.

\begin{table}[ht]
\caption{Analysis of the SARE trial}
\label{table:sare}
\begin{subtable}{1\textwidth}
\caption{Data from \cite{Senn:2015}}
\label{table:senn}
\centering
\begin{tabular}{cccccccc}
\hline
& \multicolumn{5}{c}{Outcome Categories} \\
\hline
                       & 0  & 1  & 2  & 3  & 4  & 5 & row sum         \\
  treatment      & 23 & 15 & 48  & 67  & 121 & 177 & 451    \\
  control          & 42 & 40 & 62  & 103  & 184  & 11 &  442   \\
  \hline
\end{tabular}
\end{subtable}

\bigskip
\begin{subtable}{1\textwidth}
   \caption{Results for $\tau$ and $\eta:$ Point estimators and confidence intervals (CIs)}
\centering
   \begin{tabular}{cccccc}
\hline
           & Lower bound & Indep.  &  Upper bound  & CI for (L, U)  & CI for (I, U)  \\
            \hline
 $\tau$ & 0.636 & 0.783 & 1.000 & (0.598, 1.000) & (0.758, 1.000) \\ 
  $\eta$ & 0.368 & 0.604 & 0.962 & (0.311, 1.000) & (0.554, 0.999) \\ 
  \hline
\end{tabular}
\end{subtable}
\end{table}

\subsection{A job training program with noncompliance}

In the mid-1990s, Mathematica Policy Research conducted an experiment that randomly enrolled eligible applicants into the Job Corps program \citep{Schochet:2003, Lee:2009}. We re-analyzed the dataset from 1995 with 13499 units. For more detailed descriptions of the dataset, see \cite{Zhang:2009} and \cite{Frumento:2012}. In the following analysis, $Z = 1$ if an applicant was enrolled in the program, and $Z=0$ otherwise; $D=1$ if an applicant actually participated in the program, and $D=0$ otherwise. The strong monotonicity assumption with $D_i(0) = 0$ for all $i$ holds by design. Using the hourly wage after 52 weeks of enrollment, we create a three-level ordinal outcome $Y$ as follows: $Y=0$ for zero wage because of unemployment, $Y=1$ for low wage (no more than 4.25 U.S dollars, 150 \% of the minimal wage at the time the data was collected), and $Y=2$ for high wage (more than 4.25 U.S dollars). In the following analysis we take into account covariates such as gender, age, education, and marital status.

We report the results in Table \ref{table:real-2}. Similar as before, we focus on independent and positively correlated potential outcomes. For both causal parameters $\tau_c$ and $\eta_c,$ the confidence intervals for the lower and upper bounds become narrower when we take covariates into account. Similarly as the previous example, we focus on the interpretations of the cases with independent and positive correlated potential outcomes. The confidence intervals with or without covariates for $(\tau_I, \tau_U)$ suggest that the hourly wages of more than 70\% of participants does not decrease because of the job training program. Additionally, the confidence intervals with or without covariates for $(\eta_I, \eta_U)$ suggest that the hourly wages of roughly 20\%--30\% of participants strictly increase because of the job training program.

\begin{table}[ht]
\caption{Analysis of the Job Corps Program}
\label{table:real-2}
\begin{subtable}{1\textwidth}
\caption{Results for $\tau$: point estimators and confidence intervals (CIs)}
\centering
   \begin{tabular}{cccccc}
\hline
            & $\widehat\tau_{c,L}$ & $\widehat\tau_{c,I}$  &  $\widehat \tau_{c,U}$  & CI for $\left(\tau_{c,L}, \tau_{c,U} \right)$ & CI for $\left(\tau_{c,I}, \tau_{c,U} \right)$ \\
            \hline
  w/o Covariates & 0.561 & 0.707 & 0.912 & (0.536, 0.938) & (0.687, 0.934) \\ 
  w/ Covariates & 0.592 & 0.723 & 0.912 & (0.570, 0.932) & (0.700, 0.932) \\ 
  \hline
\end{tabular}
\end{subtable}

\bigskip
\begin{subtable}{1\textwidth}
   \caption{Results for $\eta$: point estimators and confidence intervals (CIs)}
\centering
   \begin{tabular}{cccccc}
\hline
           & $\widehat \eta_{c,L}$ & $\widehat \eta_{c,I}$  &  $\widehat \eta_{c,U}$  & CI for $\left(\eta_{c,L}, \eta_{c,U} \right)$ & CI for $\left(\eta_{c,I}, \eta_{c,U} \right)$  \\
            \hline
  w/o Covariates & 0.005 & 0.209 & 0.351 & (0.000, 0.362) & (0.199, 0.363) \\ 
  w/ Covariates & 0.004 & 0.193 & 0.320 & (0.000, 0.331) & (0.180, 0.331) \\ 
  \hline
\end{tabular}
\end{subtable}
\end{table}

As a final note, we use this example to illustrate Corollary \ref{corollary:bounds-tighter}. First, without the noncompliance information, the estimators of the bounds of $\tau$ are 
$
\widehat \tau_L = 0.558
$
and
$
\widehat \tau_U = 0.937,
$
with $95\%$ confidence interval (0.541, 0.954); the estimators of the bounds of $\eta$ are
$
\widehat \eta_L = 0.004
$
and
$
\widehat \eta_U = 0.379,
$
with $95\%$ confidence interval (0.000, 0.388). With the noncompliance information, the estimators of the bounds of $\tau$ are 
$
\widehat \tau_L'' = 0.683
$
and
$
\widehat \tau_U'' = 0.937,
$
with $95\%$ confidence interval (0.666, 0.954); the estimator of the bounds of $\eta$ are
$
\widehat \eta_L'' = 0.004
$
and
$
\widehat \eta_U'' = 0.254,
$
with $95\%$ confidence interval (0.000, 0.262). Therefore, the noncompliance information in return improves the inference of $\tau$ and $\eta$ for the whole population.

\section{Concluding Remarks}\label{sec:discuss}

We proposed to use two causal parameters to evaluate treatment effect on ordinal outcomes, and derived the explicit forms of their sharp bounds by using only the marginal distributions of the potential outcomes. Although we advocate the use of parameters $\tau$ and $\eta$ to measure treatment effects, we acknowledge that some other causal parameters may also provide information in practice \citep[e.g.][]{Agresti:2010, Volfovsky:2015}. For general parameters, although deriving the explicit forms of the bounds may be difficult, we can use numerical methods. For instance, for another widely-used parameter, the relative treatment effect $\alpha = \tau + \eta -1$ \citep{Agresti:2010}, we can use numerical linear programs to calculate its maximum and minimum values under the constraints in (4).

\section*{Acknowledgments}

The authors thank Drs Avi Feller at Berkeley and Luke W. Miratrix at Harvard for various helpful suggestions. 
{\color{black} Jiannan Lu is grateful to several colleagues at Microsoft Corporation, especially Dr. Alex Deng, for continuous encouragement and support. Peng Ding gratefully acknowledges financial support from the Institute for Education Science (grant No. R305D150040) and the National Science Foundation (DMS grant No. 1713152).
Thoughtful comments from the co-Editor, Dr. Dan McCaffrey, and three anonymous reviewers, have helped improve the quality and presentation of our paper significantly.}

\bibliographystyle{apalike}
\bibliography{ordinal_bounds}

\newpage
\appendix

\section{Proofs of Lemma, Propositions and Corollaries}\label{sec:proof}

\subsection{A lemma and its proof}\label{subsec:proof-lemmas}

We first state a lemma extending a result in \cite{Strassen:1965}. This lemma plays a central role in our later proofs, and is also of independent interest. 

\begin{lemma}\label{LEMMA:1}
Assume that $\left( x_0, \ldots, x_{n-1} \right)$ and $\left( y_0, \ldots, y_{n-1} \right)$ are nonnegative constants.
\begin{enumerate}[label= (\alph*), ref = \ref{LEMMA:1}(\alph*)]

\item \label{LEMMA:1-a}
If
$
\sum_{r=s}^{n-1} x_r \ge \sum_{r=s}^{n-1} y_r
$
for all $s = 0, \ldots, n-1,$ there exists an $n\times n$ lower triangular matrix
$
\bm A_n = (a_{kl})_{0 \le k,l \le n-1}
$
with nonnegative elements such that
\begin{equation}\label{eq:lemma1-1}
\sum_{l^\prime=0}^{n-1} a_{kl^\prime} \le x_k,
\quad
\sum_{k^\prime=0}^{n-1} a_{k^\prime l}= y_l
\quad
(k, l = 0, \ldots, n-1).
\end{equation}

\item \label{LEMMA:1-b}
If
$
\sum_{r=s}^{n-1} x_r \le \sum_{r=s}^{n-1} y_r
$
for all $s = 0, \ldots, n-1,$
there exists an $n\times n$ upper triangular matrix
$
\bm B_n = (b_{kl})_{0 \le k,l \le n-1}
$
with nonnegative elements such that
\begin{equation}
\label{eq:lemma1-2}
\sum_{l^\prime = 0}^{n - 1} b_{kl^\prime}= x_k,
\quad
\sum_{k^\prime = 0}^{n - 1} b_{k^\prime l} \le y_l
\quad
(k, l=0, \ldots, n-1).
\end{equation}

\item \label{LEMMA:1-c}
If
$
\sum_{r=0}^s x_r \le \sum_{r=0}^s y_r
$
for all $s = 0, \ldots, n-1,$
there exists an $n\times n$ lower triangular matrix
$
\bm C_n = (p_{kl})_{0 \le k,l \le n-1}
$
with nonnegative elements such that
\begin{equation}
\label{eq:lemma1-3}
\sum_{l^\prime=0}^{n-1} p_{k l^\prime}= x_k,
\quad
\sum_{k^\prime=0}^{n-1} p_{k^\prime l} \le y_l
\quad
(k, l = 0, \ldots, n-1).
\end{equation}

\item \label{LEMMA:1-d}
If
$
\sum_{r=0}^s x_r \ge \sum_{r=0}^s y_r
$
for all $s = 0, \ldots, n-1,$
there exists an $n\times n$ upper triangular matrix
$
\bm D_n = (d_{kl})_{0 \le k,l \le n-1}
$
with nonnegative elements such that
\begin{equation}
\label{eq:lemma1-4}
\sum_{l^\prime = 0}^{n-1} d_{k l^\prime} \le x_k,
\quad
\sum_{k^\prime = 0}^{n-1} d_{k^\prime l}= y_l
\quad
(k, l = 0, \ldots, n-1).
\end{equation}

\item \label{LEMMA:1-e}
If we further assume
$
\sum_{r=0}^{n-1} y_r = \sum_{r=0}^{n-1} x_r,
$
the above inequalities in (\ref{eq:lemma1-1})--(\ref{eq:lemma1-4}) all reduce to equalities, i.e., the matrices $\bm A_n,$ $\bm B_n,$ $\bm C_n$ and $\bm D_n$ have $\left( x_0, \ldots, x_{n-1} \right)$ and $\left( y_0, \ldots, y_{n-1} \right)$ as their row and column sums.
\end{enumerate}
\end{lemma}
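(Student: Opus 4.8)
The plan is to establish part (a) by an explicit greedy construction, deduce parts (b)--(d) from (a) by two elementary symmetries of the problem, and obtain (e) from a mass-conservation identity. For (a), I would build the lower triangular matrix one column at a time from left to right, and within each column from the smallest admissible row index upward. Concretely, set residual row budgets $\widetilde x_k := x_k$; then for $l = 0, \ldots, n-1$ distribute the demand $y_l$ over rows $l, l+1, \ldots, n-1$ greedily, assigning to each row the minimum of its current residual budget and the demand still to be placed, updating residuals as we go, and setting $a_{kl} := 0$ for $k < l$. By construction the matrix is lower triangular, its row sums never exceed $x_k$, and its column sums equal the $y_l$ provided every column can be completely filled. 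The crux is the invariant
\[
\sum_{k=s}^{n-1} \widetilde x_k \ \ge\ \sum_{r=s}^{n-1} y_r \qquad (s = l, l+1, \ldots, n-1),
\]
which I would prove holds just before column $l$ is processed, by induction on $l$; at $l = 0$ it is exactly the hypothesis. Taking $s = l$ shows $\sum_{k \ge l}\widetilde x_k \ge \sum_{r\ge l} y_r \ge y_l$, so column $l$ always has enough residual budget; and to propagate the invariant I would split on whether the index $s$ lies above or below the last row the greedy step touches in column $l$, the only nontrivial case collapsing---after unwinding the greedy bookkeeping---to $\sum_{k\ge l}\widetilde x_k \ge \sum_{r\ge l} y_r = y_l + \sum_{r\ge l+1} y_r \ge y_l + \sum_{r\ge s} y_r$, which uses only the invariant at level $l$ and $y_r \ge 0$. (This is also the Gale--Hoffman / max-flow--min-cut feasibility criterion for a transportation problem whose admissible cells are $\{(k,l): l \le k\}$, since the rows meeting a set of columns with smallest index $s$ are precisely $\{k \ge s\}$; I prefer the greedy argument because it produces the matrix explicitly, which the later proofs exploit.)

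For parts (b), (c), (d) I would not repeat the construction but instead transform the matrix of (a) by the transpose $a_{kl}\mapsto a_{lk}$ and/or the index reversal $a_{kl}\mapsto a_{n-1-k,\,n-1-l}$. Transposition swaps the roles of rows and columns---hence of the ``$=$'' and ``$\le$'' marginal constraints and of $\bm x$ and $\bm y$---and turns a lower triangular matrix into an upper triangular one; index reversal also swaps lower and upper triangular while converting suffix sums to prefix sums. Applying (a) after the appropriate one of these operations, with the relabelling $x_k \leftrightarrow x_{n-1-k}$, $y_l \leftrightarrow y_{n-1-l}$ where needed, yields (b) (transpose), (d) (reversal), and (c) (both); each case is then just a check that the triangular support and the direction of the partial-sum hypothesis line up.

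Part (e) follows from summing all entries of the matrix in two orders, which equates the sum of the row sums with the sum of the column sums. Under the extra assumption $\sum_r x_r = \sum_r y_r$, in each of (a)--(d) one of these two families of marginal sums is already constrained to equal that common total, so the inequalities in the other family have no slack and become equalities. I expect the inductive verification of the invariant in (a) to be the only real obstacle---namely pinning down the correct invariant and confirming it survives one round of the greedy fill; the reductions in (b)--(d) and the argument for (e) are then bookkeeping.
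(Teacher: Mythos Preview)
Your proposal is correct, and for part (a) it takes a genuinely different route from the paper. The paper proves (a) by induction on $n$: it first builds $\bm A_{n-1}$ for the truncated sequences $(x_1,\ldots,x_{n-1})$ and $(y_1,\ldots,y_{n-1})$, then prepends a new first row and column, placing $a_{00}=\min(x_0,y_0)$ and, when $y_0>x_0$, distributing the excess $y_0-x_0$ over rows $1,\ldots,n-1$ \emph{in proportion to their current slack} $x_k-\sum_{l'\ge 1}a_{kl'}$. Your greedy northwest-corner fill produces a different matrix in general (for instance with $x=(0,2,2)$, $y=(2,1,1)$ your construction gives $\begin{psmallmatrix}0&0&0\\2&0&0\\0&1&1\end{psmallmatrix}$ whereas the paper's gives $\begin{psmallmatrix}0&0&0\\1&1&0\\1&0&1\end{psmallmatrix}$). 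Your invariant argument is clean and, as you note, is exactly the Gale--Hoffman feasibility check specialised to the staircase support $\{(k,l):k\ge l\}$; it has the virtue of making the connection to transportation problems explicit. For (b)--(d) your reductions by transpose and index reversal are the same in spirit as the paper's. For (e) your argument is actually simpler than the paper's: you observe that the total mass equals both $\sum_k(\text{row sum}_k)$ and $\sum_l(\text{column sum}_l)$, and since one family is already pinned to $\sum x_r=\sum y_r$ the other can have no slack; the paper instead revisits the inductive construction of (a) and checks that the case $y_0<x_0$ cannot arise. Both approaches yield explicit matrices, which is what the proof of Theorem~1 needs.
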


Note that if all the $x_i$'s are the same as the $y_i$'s, we can simply construct a diagonal matrix with elements $x_i$'s or $y_i$'s. The following proof deals with general cases.

\begin{proof}[Proof of Lemma 1(a)]
We prove by induction. When $n=1,$ we let $\bm A_1 = y_0\geq 0,$ and Lemma 1(a) holds because $y_0 \le x_0.$ When $n \ge 2,$ suppose Lemma 1(a) holds for $n-1.$ In particular, for any $\left( x_1, \ldots, x_{n-1} \right)$ and $\left( y_1, \ldots, y_{n-1} \right)$ such that
$
\sum_{r=s}^{n-1} x_r \ge \sum_{r=s}^{n-1} y_r
$
for all
$
s = 1, \ldots, n-1,
$
there exists a lower triangular matrix
$
\bm A_{n-1} = (a_{kl})_{1 \le k,l \le n-1}
$
with nonnegative elements
such that
\begin{equation}\label{eq:tl.n-1}
\sum_{l^\prime=1}^{n-1} a_{kl^\prime} \le x_k,
\quad
\sum_{k^\prime=1}^{n-1} a_{k^\prime l}= y_l
\quad
(k, l = 1, \ldots, n-1).
\end{equation}

To prove that Lemma 1(a) holds for $n,$ we let
\begin{equation*}
\bm A_n =
\left(\begin{array}{cc}
a_{00} &  \bm 0^\textrm{T} \\
\bm a & \bm A_{n-1} \\
\end{array}\right),
\end{equation*}
where $a_{00}$ and $\bm a = (a_{10}, \ldots, a_{n-1, 0})^\textrm{T}$ are defined for two separate cases below.

\begin{enumerate}[label=(\arabic*)]

\item
$y_0 <  x_0.$ We let $a_{00}=y_0,$ and $a_{k0}=0$ for all $k=1, \ldots, n-1.$ Clearly, $\bm A_n$ has nonnegative elements, and satisfies the row and column sum conditions in Lemma 1(a) holds;

\item
$y_0 \geq x_0.$ We let $a_{00}=x_0,$ and
\begin{equation}\label{eq:1stcol}
a_{k0}=
\left(y_0 - a_{00}\right) \frac{x_k - \sum_{l^\prime=1}^{n-1} a_{kl^\prime}}{\sum_{k^\prime=1}^{n-1}\left(x_{k^\prime} - \sum_{l^\prime=1}^{n-1} a_{k^\prime l^\prime}\right)}  \geq 0
\quad
(k=1, \ldots, n-1).
\end{equation}
This construction guarantees that the column sums of $\bm A_n$ are $y_l$'s. Furthermore, because $\bm A_{n-1}$ satisfies \eqref{eq:tl.n-1}, we have
\begin{eqnarray}
\label{eq:lemma1a}
\sum_{k^\prime=1}^{n-1}\left( x_{k^\prime} - \sum_{l^\prime=1}^{n-1} a_{k^\prime l^\prime} \right) & = & \sum_{k^\prime=1}^{n-1} x_{k^\prime} - \sum_{k^\prime=1}^{n-1}\sum_{l^\prime=1}^{n-1} a_{k^\prime l^\prime} =
\sum_{k^\prime=1}^{n-1} x_{k^\prime} - \sum_{l^\prime=1}^{n-1}\sum_{k^\prime=1}^{n-1} a_{k^\prime l^\prime} \nonumber \\
& = & \sum_{k^\prime=1}^{n-1} x_{k^\prime} - \sum_{k^\prime=1}^{n-1} y_{k^\prime} \ge y_0 - x_0 = y_0 - a_{00} > 0.
\end{eqnarray}
Formulas \eqref{eq:1stcol} and \eqref{eq:lemma1a} imply that
$
a_{k0} \le x_k - \sum_{l^\prime=1}^{n-1} a_{kl^\prime}
$
and therefore $\sum_{l^\prime=0}^{n-1} a_{kl^\prime} \le x_k$
for
$
k = 1, \ldots, n - 1.
$
\end{enumerate}
Therefore Lemma 1(a) holds for $n,$ and the proof is complete.
\end{proof}

\begin{proof}[Proof of Lemma 1(b).]
By applying Lemma 1(a) to
$
\left( y_0, \ldots, y_{n-1} \right)$ and $\left( x_0, \ldots, x_{n-1} \right),
$
we obtain a lower triangular matrix
$
\widetilde{\bm{B_n}} = (\tilde b_{kl})_{0 \le k,l \le n-1}
$
with nonnegative elements
such that
\begin{equation*}
\sum_{k^\prime=0}^{n-1} \tilde{b}_{k^\prime l}= x_k,
\quad
\sum_{l^\prime=0}^{n-1} \tilde{b}_{kl^\prime} \le y_k
\quad
(k, l = 0, \ldots, n-1).
\end{equation*}
Let $\bm B_n = \widetilde{\bm{B_n}}^\textrm{T},$ and the proof is complete.
\end{proof}

\begin{proof}[Proof of Lemma 1(c).]
By applying Lemma 1(a) to
$
\left( y_{n-1}, \ldots, y_0 \right)$ and $\left( x_{n-1}, \ldots, x_0 \right),
$
we obtain a lower triangular matrix
$
\widetilde{\bm{C_n}} = (\tilde p_{kl})_{0 \le k,l \le n-1}
$
with nonnegative elements
such that
\begin{equation*}
\sum_{k^\prime=0}^{n-1} \tilde{c}_{k^\prime l}= x_{n-l-1},
\quad
\sum_{l^\prime=0}^{n-1} \tilde{c}_{kl^\prime} \le y_{n-k-1}
\quad
(k, l = 0, \ldots, n-1).
\end{equation*}
Let
$
\bm C_n = \left( \tilde{c}_{n-l-1, n-k-1} \right)_{0 \le k,l \le n-1},
$
and the proof is complete.
\end{proof}

\begin{proof}[Proof of Lemma 1(d).]
By applying Lemma 1(c) to
$
\left( y_0, \ldots, y_{n-1} \right)$ and $\left( x_0, \ldots, x_{n-1} \right),
$
we obtain a lower triangular matrix
$
\widetilde{\bm{D_n}} = (\tilde d_{kl})_{0 \le k,l \le n-1}
$
with nonnegative elements
such that
\begin{equation*}
\sum_{l^\prime=0}^k \tilde{d}_{k l^\prime} = y_k,
\quad
\sum_{k^\prime=l}^{n-1} \tilde{d}_{k^\prime l} \le x_k
\quad
(k, l = 0, \ldots, n-1).
\end{equation*}
Let $\bm D_n = \widetilde{\bm{D_n}}^\textrm{T},$ and the proof is complete.
\end{proof}

\begin{proof}[Proof of Lemma 1(e).]
In addition to the proof of Lemma 1(a), we further need to show that if
$
\sum_{r=0}^{n-1} y_r = \sum_{r=0}^{n-1} x_r,
$
the row sums of the constructed matrix $\bm A_n$ are $x_k$'s. In the induction of the proof of Lemma 1(a), if we have constructed matrix $A_{n-1},$ the case with $y_0<x_0$ would not happen.
We consider only the case with $y_0 \geq x_0.$ Because the lower triangular matrix $A_{n-1}$ has the column sums $y_l$'s, and
$
\sum_{r=0}^{n-1} y_r = \sum_{r=0}^{n-1} x_r,
$
we have
$$
\sum_{k^\prime=1}^{n-1}\left( x_{k^\prime} - \sum_{l^\prime=1}^{n-1} a_{k^\prime l^\prime} \right) = \sum_{k^\prime=1}^{n-1} x_{k^\prime} - \sum_{k^\prime=1}^{n-1} y_{k^\prime}  =  y_0 - x_0 = y_0 - a_{00} > 0.
$$
The above formula, coupled with the construction of the first column of $\bm A_n$ in \eqref{eq:1stcol}, gives $a_{k0} = x_k - \sum_{l'=1}^{n-1} a_{kl'}$ and thus $\sum_{l'=0}^{n-1} a_{kl'} = x_k$ for all $k.$
\end{proof}

\subsection{Proof of Proposition 1}

Now we prove the main Proposition 1, and the proofs for other propositions and corollaries are relatively straightforward.

\begin{proof}[Proof of Proposition 1.]
For all $j=0, 1, \ldots, J-1,$ 
\begin{eqnarray}
\tau &=& \mathop{\sum\sum}_{ k \geq   l}  p_{kl}  =    1 -   \mathop{\sum\sum}_{ k <  l}  p_{kl}
\nonumber \\
&\leq& 1 - \sum_{k<j} \sum_{ l\ge j}p_{kl}
 = 1 - \left(   \sum_{k=0}^{J-1} \sum_{l\geq j} p_{kl} - \sum_{k\geq j} \sum_{l \geq j} p_{kl} \right)
\label{eq:upper-ineq-1} \\
&\leq& 1 - \left(   \sum_{k=0}^{J-1} \sum_{l\geq j} p_{kl} - \sum_{k\geq j} \sum_{l=1}^{J-1} p_{kl} \right)
=1 - \left(\sum_{l\geq j}  p_{+l}    - \sum_{k\geq j}  p_{k+}\right)
\label{eq:upper-ineq-2} \\
& = & 1 + \Delta_j, \nonumber
\end{eqnarray}
and
\begin{eqnarray}
\tau & = & \mathop{\sum\sum}_{k\geq l} p_{kl}  \nonumber  \\
& \ge & \sum_{k\geq j } \sum_{l\leq j} p_{kl}
=  \sum_{k\geq j } \sum_{l = 0}^{J-1} p_{kl}  - \sum_{k\geq j } \sum_{l >j} p_{kl} \label{eq:lower-ineq-1} \\
& \ge &   \sum_{k\geq j } \sum_{l = 0}^{J-1} p_{kl}  - \sum_{k=0}^{J-1} \sum_{l >j} p_{kl} = \sum_{k\geq j } p_{k+} - \sum_{l >j} p_{+l} \label{eq:lower-ineq-2} \\
& = & p_{+j} + \Delta_j, \nonumber
\end{eqnarray}
which implies that
$
\tau_L \leq \tau \leq \tau_U.
$

We now construct two probability matrices attaining the lower and upper bounds respectively, using Lemma \ref{LEMMA:1}.

We first construct a probability matrix attaining the upper bound $\tau_U.$ Let
$$
j_1     = \min\left\{ 0\leq j ' \leq J-1 :  \Delta_{j'} =  \min_{0\leq j\leq J-1} \Delta_j  \right\}
$$
be the minimum index $j$ that attains the minimum value of $\Delta_j$'s. To attain $\tau_U,$ the equalities in \eqref{eq:upper-ineq-1} and \eqref{eq:upper-ineq-2} must hold, i.e.,
\begin{equation}\label{eq:guide}
\mathop{\sum\sum}_{ k <  l}  p_{kl}  = \sum_{k<j_1} \sum_{ l\ge j_1}p_{kl}  ,
\quad
 \sum_{k\geq j_1} \sum_{l \geq j_1} p_{kl} = \sum_{k\geq j_1} \sum_{l=1}^{J-1} p_{kl} .
\end{equation}

If $j_1=0,$ $\min_{0\leq j\leq J-1}  \Delta_j = \Delta_0 = 0,$ implying that $\Delta_j = \sum_{k=j}^{J-1} p_{k+} - \sum_{l=j}^{J-1} p_{+l}  \geq 0$ for all $j,$ i.e., the marginal probabilities satisfy the stochastic dominance assumption. According to Lemma \ref{LEMMA:1-e}, there exists a lower triangular probability matrix $\bm P$ with marginal probabilities $\bm p_1 = \left(p_{0+}, \ldots, p_{J-1,+}\right)^\textrm{T}$ and $\bm p_0 = \left( p_{+0}, \ldots, p_{+, J-1}\right)^\textrm{T}.$ Correspondingly, $\tau = 1 + \Delta_0 = 1.$

If $j_1>0,$ the constraints in (\ref{eq:guide}) force some elements of the probability matrix to be zeros. To be more specific, the constraints in (\ref{eq:guide}) imply that the probability matrix has the following block structure:
\begin{equation}\label{eq:structure-p-tauu}
\bm P =
\left(
    \begin{array}{cc}
   \bm P_{\textrm{tl}}  & \bm P_{\textrm{tr}} \\
   \bm 0 & \bm P_{\textrm{br}} \\
  \end{array}\right),
\end{equation}
where the $j_1\times j_1$ sub-matrix
$
\bm P_{\textrm{tl}}
$
on top left and the $(J-j_1)\times (J-j_1)$ sub-matrix
$
\bm P_{\textrm{br}}
$
on bottom right are both lower triangular, and the $j_1\times (J-j_1)$ sub-matrix
$
\bm P_{\textrm{tr}}
$
on top right has no restrictions.

Because $\Delta_{j_1} \le \Delta_j$ for all $j=0,1,\ldots, J-1,$ we have
$$
\sum_{k=j}^{j_1-1} p_{k+} \ge \sum_{l=j}^{j_1-1} p_{+l}
\quad
(j=0, \ldots, j_1-1);
\quad
\sum_{k=j_1}^{j} p_{k+} \le \sum_{l=j_1}^{j} p_{+l}
\quad
(j = j_1,\ldots,J-1).
$$
Given the above two sets of constraints on the marginal probabilities, we construct the probability matrix $\bm P$ in three steps.
\begin{enumerate}[label=(\arabic*)]

\item
We apply Lemma \ref{LEMMA:1-a} to $\left( p_{0+}, \ldots, p_{j_1-1, +} \right)$ and $\left( p_{+0}, \ldots, p_{+, j_1-1} \right),$ and obtain a lower triangular matrix
$
\bm P_{\textrm{tl}}=\left( p_{kl} \right)_{0 \le k,l \le j_1-1}
$
with nonnegative elements
such that
$$
\sum_{l^\prime = 0}^{j_1-1} p_{k l^\prime} \le p_{k+},
\quad
\sum_{k^\prime = 0}^{j_1-1} p_{k^\prime l}= p_{+l}
\quad
(k, l = 0, \ldots, j_1 - 1).
$$
\item
We apply Lemma \ref{LEMMA:1-c} to $\left(p_{j_1+}, \ldots, p_{J-1, +} \right)$ and $\left(p_{+j_1}, \ldots, p_{+, J-1} \right),$ and obtain a lower triangular matrix
$
\bm P_{\textrm{br}}=\left( p_{kl} \right)_{j_1 \le k,l \le J-1}
$
with nonnegative elements
such that
$$
\sum_{l^\prime = j_1}^{J-1} p_{k l^\prime}= p_{k+},
\quad
\sum_{k^\prime = j_1}^{J-1} p_{k^\prime l} \le p_{+l}
\quad
(k,l = j_1, \ldots, J-1).
$$
\item
We construct
$
\bm P_{\textrm{tr}}=\left( p_{kl} \right)_{0 \le k \le j_1 - 1, j_1 \le l \le J-1}
$
by letting
$$
p_{kl}= \left(p_{k+} - \sum_{l^\prime = 0}^{j_1 - 1} p_{k l^\prime}\right)
        \left(p_{+l} - \sum_{k^\prime = j_1}^{J-1} p_{k^\prime l}\right) \geq 0
\quad
(k = 0,\ldots,j_1-1; \; l = j_1,\ldots,J-1).
$$
\end{enumerate}
The constructed probability matrix $\bm P$ has marginal probabilities $\bm p_1=\left(p_{0+}, \ldots, p_{J-1,+}\right)^\textrm{T}$ and $\bm p_0=\left( p_{+0}, \ldots, p_{+, J-1}\right)^\textrm{T}.$ What is more, by \eqref{eq:structure-p-tauu} the $\tau$ of $\bm P$ is the sum of all the elements in $\bm P_{\textrm{tl}}$ and $P_{\textrm{br}},$ which we construct in the above (1) and (2). Therefore, we have
$$
\tau = \sum_{l^\prime = 0}^{j_1-1} p_{+l^\prime} + \sum_{k^\prime = j_1}^{J-1} p_{k^\prime +} = 1 + \Delta_{j_1},
$$
which implies that the probability matrix $\bm P$ attains $\tau_U.$

We then construct a probability matrix attaining the lower bound in $\tau_L.$ Let
$$
j_2 = \min\left\{ j' :   p_{+j'} + \Delta_{j'}  = \max_{0\leq j \leq J-1} (p_{+j} + \Delta_j)  \right\}
$$
be the minimum index $j$ that attains the maximum value of $(p_{+j} + \Delta_j)$'s. To attain $\tau_L,$ the equalities in \eqref{eq:lower-ineq-1} and \eqref{eq:lower-ineq-2} must hold, i.e.,
\begin{equation}\label{eq:guide2}
 \mathop{\sum\sum}_{k\geq l} p_{kl}  = \sum_{k\geq j_2 } \sum_{l\leq j_2 } p_{kl}  ,
\quad
 \sum_{k \ge j_2} \sum_{l >j_2} p_{kl} =  \sum_{k=0}^{J-1} \sum_{l >j_2} p_{kl}.
\end{equation}

If $j_2=0,$ from \eqref{eq:guide2} we know that the elements in the lower triangular part but not in the first column of the probability matrix $\bm P$ are all zeros, i.e.,
\begin{equation}\label{eq:structure-p-taul-0}
\bm P =
\left(
\begin{array}{cc}
  \bm p & \bm P_{\textrm{tr}} \\
   p_{J-1, 0} & 0^\textrm{T}\\
\end{array}
\right),
\end{equation}
where $\bm p=\left( p_{0, 0}, \ldots, p_{J - 2, 0} \right)^\textrm{T},$ and the $(J-1)\times (J-1)$ sub-matrix $P_{\mathrm{tr}}$ on top right is upper triangular. Because $p_{+0} + \Delta_0 \ge p_{+j} + \Delta_j$ for all $j,$ we have
$$
\sum_{k = 0}^j p_{k+} \ge \sum_{l = 0}^j p_{+, l+1}
\quad
(j = 0, \ldots, J - 2).
$$
Applying Lemma \ref{LEMMA:1-d} to $\left( p_{0+}, \ldots, p_{J - 2, +} \right)$ and $\left(p_{+1}, \ldots, p_{+, J - 1} \right),$ we obtain an upper triangular matrix
$
\bm P_{\textrm{tr}}=\left( p_{kl} \right)_{0 \le k \le J-2, 1 \le l \le J-1}
$
with nonnegative elements
such that
$$
\sum_{l^\prime = 1}^{J - 1} p_{k l^\prime} \le p_{k+},
\quad
\sum_{k^\prime = 0}^{J - 2} p_{k^\prime l}= p_{+l}
\quad
(k=0, \ldots, J - 2; \; l = 1, \ldots, J - 1).
$$
To complete the construction, let $p_{J-1, 0} = p_{J-1, +},$ and
$$
p_{k0} = p_{k+} - \sum_{l^\prime = 1}^{J - 1} p_{k l^\prime} \geq 0
\quad
(k = 0, \ldots, J - 2).
$$
The constructed probability matrix $\bm P$ has marginal probabilities $\bm p_1 = \left(p_{0+}, \ldots, p_{J-1,+}\right)^\textrm{T}$ and $\bm p_0 = \left( p_{+0}, \ldots, p_{+, J-1}\right)^\textrm{T}.$ Moreover, by \eqref{eq:structure-p-taul-0} the $\tau$ of $\bm P$ is the sum of all the elements in the first column. Therefore
$
\tau = p_{+0} = p_{+0} + \Delta_0,
$
which implies that $\bm P$ attains $\tau_L.$

If $j_2=J-1,$ the proof is similar to the above case with $j_2=0.$ If $0 < j_2 < J-1,$ because the first equality in \eqref{eq:guide2} is equivalent to
$$
\sum_{k< j_2 } \sum_{l\le k } p_{kl} + \sum_{k \ge j_2 } \sum_{l\le k } p_{kl} = \sum_{k \ge j_2 } \sum_{l\le j_2 } p_{kl},
$$
the probability matrix $\bm P$ must satisfy the following constraints:
\begin{enumerate}[label = (C\arabic*)]

\item \label{condition-1}
 For all $k=0, \ldots, j_2-1,$ $p_{kl}=0$ for all $l=0, \ldots, k.$

\item \label{condition-2}
 For all $k=j_2+1, \ldots, J-1,$ $p_{kl}=0$ for all $l=j_2+1, \ldots, k.$

\end{enumerate}
Similarly, because the second equality in \eqref{eq:guide2} is equivalent to
$$
\sum_{k \ge j_2 } \sum_{l > j_2} p_{kl} = \sum_{k \ge j_2 } \sum_{l > j_2 } p_{kl} + \sum_{k < j_2 } \sum_{l > j_2 } p_{kl},
$$
the probability matrix $\bm P$ must further satisfy the following constraint:
\begin{enumerate}[label = (C\arabic*)]
\setcounter{enumi}{2}
\item \label{condition-3}
$p_{kl}=0,$ for all $k=0, \ldots, j_2-1$ and $l=j_2 + 1, \ldots, J-1.$
\end{enumerate}
The constraints in \ref{condition-1}, \ref{condition-2} and \ref{condition-3} imply that $\bm P$ must have the following block structure:
\begin{equation}\label{eq:structure-p-taul}
\bm P =
\left(
\begin{array}{cc}
  \left(\bm 0, \bm P_\textrm{tl} \right)  & \bm 0 \\
  \bm P_\textrm{bl} & \left( \begin{array}{c} \bm P_\textrm{br} \\ \bm 0^\textrm{T} \end{array} \right) \\
\end{array}
\right)
\end{equation}
where the $j_2\times j_2$ sub-matrix
$
\bm P_{\textrm{tl}}
$
and the $(J-j_1-1)\times (J-j_1-1)$ sub-matrix
$
\bm P_{\textrm{br}}
$
are both upper triangular, and the $(J - j_2)\times (j_2 + 1)$ sub-matrix
$
\bm P_{\textrm{bl}}
$
on bottom left has no restrictions.

Because $p_{+j_2} + \Delta_{j_2} \ge p_{+j} + \Delta_j$ for all $j,$ we have
$$
\sum_{k = j}^{j_2 - 1} p_{k+} \le \sum_{l = j}^{j_2 - 1}  p_{+, l+1}
\quad
(j = 0, \ldots, j_2 - 1);
\quad
\sum_{k = j_2}^s p_{k+} \ge \sum_{l = j_2}^s p_{+, l+1}
\quad
(j = j_2, \ldots, J - 2).
$$
Given the above two sets of constraints for the marginal probabilities, we construct the probability matrix $\bm P$ in three steps.
\begin{enumerate}[label=(\arabic*)]

\item
We apply Lemma \ref{LEMMA:1-b} to $\left( p_{0+}, \ldots, p_{j_2 - 1, +} \right)$ and $\left( p_{+1}, \ldots, p_{+, j_2} \right),$ and obtain an upper triangular matrix
$
\bm P_{\textrm{tl}}=\left( p_{kl} \right)_{0 \le k \le j_2-1, 1 \le l \le j_2}
$
with nonnegative elements
such that
$$
\sum_{l^\prime = 1}^{j_2} p_{k l^\prime}= p_{k+},
\quad
\sum_{k^\prime = 0}^{j_2 - 1} p_{k^\prime l} \le p_{+l}
\quad
(k = 0, \ldots, j_2 - 1; \; l = 1, \ldots, j_2).
$$

\item
We apply Lemma \ref{LEMMA:1-d} to $\left( p_{j_2+}, \ldots, p_{J - 2, +} \right)$ and $\left( p_{+, j_2 + 1}, \ldots, p_{+, J - 1} \right),$ and obtain an upper triangular matrix
$
\bm P_{\textrm{br}}=\left( p_{kl} \right)_{j_2 \le k \le J-2, j_2+1 \le l \le J-1}
$
with nonnegative elements
such that
$$
\sum_{l^\prime = j_2 + 1}^{J - 1} p_{k l^\prime} \le p_{k+}
,
\quad
\sum_{k^\prime = j_2}^{J - 2} p_{k^\prime l}= p_{+l}
\quad
(k=j_2, \ldots, J - 2; \; l = j_2 + 1, \ldots, J - 1).
$$

\item
We construct
$
\bm P_{\textrm{bl}}=\left( p_{kl} \right)_{j_2 \le k \le J-1, 0 \le l \le j_2}
$
by letting
$$
p_{kl}= \left( p_{k+} - \sum_{l^\prime = j_2 + 1}^{J-1} p_{k l^\prime}\right) \left(p_{+l} - \sum_{k^\prime = 0}^{j_2 - 1}p_{k^\prime l}\right) \geq 0
\quad
(k = j_2, \ldots, J - 1; \; l = 0, \ldots, j_2).
$$
\end{enumerate}
The constructed probability matrix $\bm P$ has marginal probabilities $\bm p_1 = \left(p_{0+}, \ldots, p_{J-1,+}\right)^\textrm{T}$ and $\bm p_0 = \left( p_{+0}, \ldots, p_{+, J-1}\right)^\textrm{T}.$ Moreover, by \eqref{eq:structure-p-taul} the corresponding $\tau$ is the sum of all the elements in $\bm P_{\textrm{bl}},$ which we construct in the above (3). Therefore,
\begin{eqnarray*}
\tau  = 1 - \sum_{k^\prime = 0}^{j_2-1} p_{k^\prime + } - \sum_{l^\prime = j_2 + 1}^{J-1} p_{+l^\prime}
= \sum_{k^\prime = j_2}^{J-1} p_{k^\prime + } - \sum_{l^\prime = j_2 + 1}^{J-1} p_{+l^\prime} = p_{+j_2} + \Delta_{j_2},
\end{eqnarray*}
which implies that $\bm P$ attains $\tau_L.$
\end{proof}

\subsection{Proofs of other propositions}\label{sec:proof-other-propositions}

\begin{proof}[Proof of Proposition 2]
Because
$
\eta = 1- \mathrm{pr}\left\{Y_i (0) \ge Y_i(1)\right\},
$
its lower bound is one minus the upper bound of $\mathrm{pr}\left\{Y_i (0)\ge Y_i(1)\right\}.$
By switching the treatment and control labels, we can bound $\mathrm{pr}\left\{Y_i (0) \ge Y_i (1)\right\}$ from the above by
$$
\mathrm{pr}\left\{Y_i (0)\ge Y_i (1)\right\} \leq
1 - \max_{0\leq j\leq J-1} \Delta_j,
$$
which implies that
$
\eta_L = \max_{0\leq j\leq J-1} \Delta_j .
$

Similarly, the upper bound of $\eta$ equals one minus the lower bound of $\mathrm{pr}\left\{Y_i (0)\ge Y_i (1)\right\}.$ By switching the treatment and control labels, we can bound $\mathrm{pr}\left\{Y_i (0) \ge Y_i (1)\right\}$ from below by
$$
\mathrm{pr}\left\{Y_i (0)\ge Y_i (1)\right\}  \geq
\max_{0\leq j\leq J-1}  \left( p_{j+} - \Delta_j \right),
$$
which implies that
$
\eta_U = 1 + \min_{0\leq j\leq J-1} \left(\Delta_j - p_{j +}\right).
$
\end{proof}

\begin{proof}[Proof of Proposition 3]
With independent potential outcomes, the probability matrix $P$ has elements
$
p_{kl} = p_{k+} p_{+l}
$
for
$
k
$
and
$
l.
$
We obtain $\tau_I$ and $\eta_I$ by their definitions. Obviously, they are between their lower and upper bounds, i.e.,
$
\tau_L\le \tau_I \le \tau_U
$
and
$
\eta_L \le \eta_I \le \eta_U.
$
\end{proof}

\begin{proof}[Proof of Proposition 4]
The proof follows Lee (2009). Because any value of $\tau$ within the covariate adjusted bounds $[ \tau_L^\prime, \tau_U^\prime ]$ must be compatible with the distributions of $\left\{ Y(1), \bm X\right\}$ and $\left\{Y(0), \bm X \right\},$ it must also be compatible with the distributions of $Y(1)$ and $Y(0)$ by discarding $\bm X.$ Therefore, any value of $\tau$ within the adjusted bounds $[ \tau_L^\prime, \tau_U^\prime ]$ must also be within the unadjusted bounds $[\tau_L, \tau_U].$ Consequently, the adjusted bounds are tighter, i.e., $[ \tau_L^\prime, \tau_U^\prime ] \subset [\tau_L, \tau_U].$ Similar arguments apply to the covariate adjusted bounds and the unadjusted bounds for $\tau_c.$
\end{proof}

\begin{proof}[Proof of Proposition 5]
Under monotonicity, by the law of total probability, we have
\begin{equation*}
\tau = \pi_c\tau_c + \pi_a \tau_a + \pi_n \tau_n.
\end{equation*}
Under exclusion restriction, we have $\tau_a = 1$ and $\tau_n = 1,$ yielding
\begin{equation*}
\tau  = \pi_c \tau_c + 1 - \pi_c,
\end{equation*}
which implies that
\begin{equation*}
\tau_c = \tau/\pi_c- \left( 1-\pi_c \right) / \pi_c.
\end{equation*}
Analogously, we have
$
\eta = \pi_c \eta_c,
$
which implies that
$
\eta_c = \eta/\pi_c.
$
\end{proof}

\subsection{Proofs of the corollaries}\label{subsec:proof-corollaries}

\begin{proof}[Proof of Corollary 1]
By Proposition 1, $\tau=1$ if and only if
$
\min_{0\le j \le J-1}\Delta_j=0.
$
Because $\Delta_0=0,$ this is equivalent to
$
\Delta_j \ge 0
$
for all $j$, i.e., the stochastic dominance assumption holds.
\end{proof}

\begin{proof}[Proof of Corollary 2] 
Similar to the proof of Proposition 2, because
$
\eta = 1- \mathrm{pr}\left\{Y_i (0) \ge Y_i(1)\right\},
$
(9) immediately implies (10). Therefore, we need only to prove that (9) is the sufficient and necessary condition that the lower and upper bounds of $\tau$ are the same, i.e., $\tau_L = \tau_U.$

First we prove the necessity of the condition. Assume that it does not hold, i.e., there does exist $k_1, k_2 \in \mathbb K$ and $l_1, l_2 \in \mathbb L$ such that (9) holds. In this case we construct two probability matrices with the same marginal probabilities but different values of $\tau.$ The first probability matrix is 
$
\bm P =  \left( p_{k+} p_{+l} \right)_{0\leq k,l\leq J-1}.
$
For the second probability matrix, let 
$\xi = \min 
\left( 
p_{k_1, +} p_{+, l_1}, 
\;
p_{k_2, +} p_{+, l_2} 
\right),
$
which is a positive constant. We then apply the following matrix operation to the $2\times 2$ sub-matrix of the first probability matrix:
\begin{equation*}
\begin{pmatrix}
p_{k_1 l_1} & p_{k_1 l_2} \\
p_{k_2 l_1} & p_{k_2 l_2}
\end{pmatrix}
\longrightarrow
\begin{pmatrix}
p_{k_1 l_1} - \xi & p_{k_1 l_2} + \xi \\
p_{k_2 l_1} + \xi & p_{k_2 l_2} - \xi
\end{pmatrix}
\end{equation*} 
The above operation preserves the marginal probabilities, and the difference of $\tau$ between the first and second probability matrices is $\xi$, if $ k_2 \ge l_2 > k_1 \ge l_1$, and $-\xi$, if $l_2 > k_2 \ge l_1 > k_1$.

Second, we prove the sufficiency of the condition. If $|\mathbb K| = 1$ or $|\mathbb L|=1,$ the probability matrix degenerates and consequently we have $\tau_L = \tau_{c,U}.$ If $|\mathbb K| \ge 2$ and $|\mathbb L| \ge 2,$ let 
$
k_* = \min_{k\in \mathbb K} k
$
and
$
k^* = \max_{k\in \mathbb K} k
$
be the minimal and maximal indices of nonzero $p_{k+}$'s, and
$
l_* = \min_{l\in \mathbb L} l
$
and
$
l^* = \max_{l\in \mathbb L} l
$
the minimal and maximal indices of nonzero $p_{+l}$'s. A useful fact that we repeatedly use is that if $p_{k+} = 0,$ then $p_{kl} = 0$ for all $l.$ Similarly, if $p_{+l} = 0,$ then $p_{kl} = 0$ for all $k.$

Because $k_*, k^*$ and $l_*, l^*$ cannot satisfy (9), we discuss the two following cases based on the relative locations of the two intervals $[k_*, k^*]$ and $[l_*, l^*]:$
\begin{enumerate}
\item ``Non-overlapping,'' i.e., $k_* \ge l^*$ or $k^* < l_*:$

\begin{enumerate}

\item If
$
k_* \ge l^*,
$
we prove that $p_{kl} = 0$ for all $k<l.$ Assume the claim does not hold, then there exists $k^\prime < l^\prime $ such that $p_{k^\prime l^\prime }>0,$ then $p_{k^\prime+}>0$ and $p_{+l^\prime}>0. $ This implies that $k_* \le k^\prime < l^\prime \le l^*,$ contradicting the initial assumption. Therefore, $\tau_L = \tau_U = 1.$

\item If
$
k^* < l_*,
$
similarly $p_{kl} = 0$ for all $k\ge l,$ implying that $\tau_L = \tau_U = 0.$

\end{enumerate}

\item ``Inclusive,'' i.e., $l^* > k^* > k_* \ge l_*$ or $k^* \ge l^* > l_* > k_*:$ 

\begin{enumerate}

\item If $l^* > k^* > k_* \ge l_*,$ and furthermore if there exists $l^\prime \in \mathbb L$ such that $k_* < l^\prime \le k^*,$ then $l^\prime \ne l_*$ and $l^\prime \ne l^*.$ Moreover, $k_*, k^*$ and $l^\prime, l^*$ satisfy (9), contradicting the initial assumption. Therefore for all $l \in \mathbb L,$ $l \le k_*$ or $l > k^*.$ Consequently, 
\begin{equation*}
\begin{split}
\tau & = \sum_{k\ge l} p_{kl} 1  ( k \in \mathbb K, l \in \mathbb L ) = \sum_{k\ge l} p_{kl} 1 ( k \in \mathbb K, l \in \mathbb L, l \le k_* ) \\
& = \sum p_{kl} 1 ( k \in \mathbb K, l \in \mathbb L, l \le k_* ) = \sum_{l \le k_*, l \in \mathbb L} p_{+l}
\end{split}
\end{equation*}
is identifiable, which implies that $\tau_L = \tau_U.$

\item If $k^* \ge l^* > l_* > k_*,$ similarly as above for all $k \in \mathbb K,$ $k < l_*$ or $k \ge l^*.$ Consequently, 
\begin{equation*}
\tau 
= \sum_{k\ge l} p_{kl} 1 ( k \in \mathbb K, l \in \mathbb L ) \\
= \sum_{k\ge l} p_{kl} 1 ( k \in \mathbb K, l \in \mathbb L, k \ge l^* ) \\
= \sum_{k \ge l^*, k \in \mathbb K} p_{k+}
\end{equation*}
is identifiable, which implies that $\tau_L = \tau_U.$
\end{enumerate}
\end{enumerate}
\end{proof}

\begin{proof}[Proof of Corollary 3]
The proof follows directly from Propositions 1 and 2.
\end{proof}

\begin{proof}[Proof of Corollary 4]
The closed-form expressions for
$
\tau_{c,L}'',
$
$
\tau_{c,U}'',
$
$
\eta_{c,L}''
$
and
$
\eta_{c,U}''
$
follow directly from Proposition 5 and Corollary 2. Furthermore, under the monotonicity and exclusion restriction assumptions, we have
\begin{equation*}
\Delta_j = \pi_c \Delta_{c,j}
\quad
(j=0, \ldots, J-1).
\end{equation*}
Therefore, for the upper bound of $\tau$, we have
\begin{equation*}
\tau_U = 1 - \pi_c + \pi_c (1 + \min \Delta_{c,j} )
= \tau_U'',
\end{equation*}
and for the lower bound, we have
\begin{equation*}
\tau_L \le  \max ( p_{+j} - 1 + \pi_c + \pi_c\Delta_{c,j} )
= 1 - \pi_c + \pi_c \max ( c_{+j} + \Delta_{c,j} )
= \tau_L''.
\end{equation*}
The first step holds because under the strong monotonicity assumption
$
n_{+j}\pi_n \le \pi_n,
$
and under the monotonicity assumption
$
a_{+j}\pi_a + n_{+j}\pi_n \le \pi_a + \pi_n.
$
Similar arguments apply to the bounds of $\eta.$
\end{proof}

\section{Details of the EM Algorithm with Noncompliance}
\label{sec::em}

Let
$
\bm X_i = \bm x_i,
$
$
Z_i = z_i,
$
$
D_i^\mathrm{obs} = d_i
$
and
$
Y_i^\mathrm{obs} = y_i
$
be the values of the pretreatment covariates, treatment assigned, treatment received and observed outcome of the $i$th unit. We write the likelihood function as
\begin{eqnarray*}
L\left(\bm\theta\right) & = & \prod_{i:z_i=1,d_i=1} \left\{\pi_a\left(\bm x_i\right) a_{y_i, +}\left(\bm x_i\right) + \pi_c\left(\bm x_i\right) c_{y_i, +}\left(\bm x_i\right)\right\}
\quad
\times \prod_{i:z_i=1,d_i=0} \left\{\pi_n\left( \bm x_i\right) n_{y_i, +}\left(\bm x_i\right)\right\} \\
& \times & \prod_{i:z_i=0,d_i=1}\left[\pi_a\left(\bm x_i\right) a_{+, y_i}\left(\bm x_i\right)\right]
\quad
\times  \prod_{i:z_i=0,d_i=0}\left\{\pi_n\left(\bm x_i\right) n_{+, y_i}\left(n; \bm x_i\right) + \pi_c\left(\bm x_i\right) c_{+, y_i}\left(\bm x_i\right)\right\}.
\end{eqnarray*}

Let $G_i = g_i$ be the value of the principal stratification variable of the $i$th unit. By treating it as missing data, we write the complete-data log-likelihood as:
\begin{eqnarray*}
l_\textrm{C}\left(\bm\theta\right) & = & \sum_{i:g_i = a, z_i=1} \left[\log \left\{\pi_a\left(\bm x_i\right)\right\} + \log \left\{a_{y_i, +}\left(\bm x_i\right)\right\} \right]
\quad
+  \sum_{i:g_i=c, z_i=1} \left[\log \left\{\pi_c\left(\bm x_i\right)\right\} + \log \left\{c_{y_i, +}\left(\bm x_i\right)\right\} \right] \nonumber \\
& + & \sum_{i:g_i=n, z_i=1} \left[\log \left\{\pi_n\left(\bm x_i\right)\right\} + \log \left\{n_{y_i, +}\left(\bm x_i\right)\right\} \right]
\quad
+  \sum_{i:g_i=a, z_i=0} \left[\log \left\{\pi_a\left(\bm x_i\right)\right\} + \log \left\{a_{+, y_i}\left(\bm x_i\right)\right\} \right] \nonumber \\
& + & \sum_{i:g_i=c, z_i=0} \left[\log \left\{\pi_c\left(\bm x_i\right)\right\} + \log \left\{c_{+, y_i}\left(\bm x_i\right)\right\} \right]
\quad
+ \sum_{i:g_i=n, z_i=0} \left[\log \left\{\pi_n\left(\bm x_i\right)\right\} + \log \left\{n_{+, y_i}\left(\bm x_i\right)\right\} \right],
\end{eqnarray*}

We denote the realizations of (4) and (5) for the $i$th unit when evaluated at the true parameter value $\bm \theta$ as
$
\pi_g\left(\bm x_i\right),
$
$
g_{k+}\left(\bm x_i\right)
$
and
$
g_{+l}\left(\bm x_i\right),
$
and those when evaluated at the $t$th iteration of the parameter estimate $\bm \theta^{(t)}$ as
$
\pi_g^{(t)}\left(\bm x_i\right),
$
$
g_{k+}^{(t)}\left(\bm x_i\right)
$
and
$
g_{+l}^{(t)}\left(\bm x_i\right).
$
The EM algorithm proceeds as follows. Given the current $t$th iteration of the parameter estimate $\bm \theta^{(t)},$ we obtain the updated $(t+1)$th iteration $\bm \theta^{(t+1)}$ as follows:
\begin{enumerate}
\item E-Step: obtain the conditional expectation of the complete-data log-likelihood, given observed data and the current parameter estimate $\bm \theta^{(t)},$ by finding the (current) conditional probabilities of the principal stratum $g,$ denoted as $\pi^{(t)}_{g,i}.$

\begin{enumerate}

\item For all $i$ such that $z_i=1$ and $d_i=1,$ let
$
\pi^{(t)}_{n,i} = 0,
$
and
$$
\pi^{(t)}_{g,i} = \frac{\pi_g^{(t)}\left(\bm x_i\right) g_{y_i, +}^{(t)}\left(\bm x_i\right)}{\pi_a^{(t)}\left(\bm x_i\right) a_{y_i, +}^{(t)}\left(\bm x_i\right) + \pi_c^{(t)}\left(\bm x_i\right) c_{y_i, +}^{(t)}\left(\bm x_i\right)}
\quad
(g = a, c).
$$

\item For all $i$ such that $z_i=1$ and $d_i=0,$ let
$
\pi_{a,i}^{(t)} = 0,
$
$
\pi_{c,i}^{(t)} = 0,
$
and
$
\pi_{n,i}^{(t)} = 1.
$

\item For all $i$ such that $z_i=0$ and $d_i=1,$ let
$
\pi_{a,i}^{(t)} = 1,
$
$
\pi_{c,i}^{(t)} = 0,
$
and
$
\pi_{n,i}^{(t)} = 0.
$

\item For all $i$ such that $z_i=0$ and $d_i=0,$ let
$
\pi^{(t)}_{a,i} = 0
$
and
$$
\pi^{(t)}_{g,i} = \frac{\pi_g^{(t)}\left(\bm x_i\right) g_{+, y_i}^{(t)}\left( \bm x_i\right)}{\pi_c^{(t)}\left( \bm x_i\right) c_{+, y_i}^{(t)}\left( \bm x_i\right) + \pi_n^{(t)}\left( \bm x_i\right) n_{+, y_i}^{(t)}\left( \bm x_i\right)}
\quad
(g = c, n).
$$

\end{enumerate}

\item M-Step: obtain the updated parameter estimate $\bm \theta^{(t+1)},$ by maximizing the conditional expectation with respect to $\bm \theta.$ To do this, we adopt the following two-step procedure:

\begin{enumerate}

\item Obtain $\bm \theta_\mathrm{PS}^{(t+1)},$ the updated estimates of the parameters in the model for the principal strata, by maximizing the following objective function:
\begin{eqnarray*}
\mathit{F} \left(\bm \theta_\textrm{PS}\right) & = &
\sum_{i:z_i=1,d_i=1} \left\{ \pi_{a,i}^{(t)} \log \pi_a\left(\bm x_i\right) + \pi_{c,i}^{(t)} \log \pi_c\left( \bm x_i\right) \right\} \quad
+ \sum_{i:z_i=1,d_i=0} \pi_{n,i}^{(t)} \log \pi_n\left( \bm x_i\right) \\
& + & \sum_{i:z_i=0,d_i=1} \pi_{a,i}^{(t)} \log \pi_a\left( \bm x_i\right)
\quad
+ \sum_{i:z_i=0,d_i=0} \left\{ \pi_{c,i}^{(t)} \log \pi_c\left( \bm x_i\right) + \pi_{n,i}^{(t)} \log \pi_n\left(\bm x_i\right)\right\}.
\end{eqnarray*}    
The optimization problem is equivalent to fitting the following weighted multinomial logistic regression:

\begin{enumerate}

\item For $i$ such that $z_i=1$ and $d_i=1,$ create two new observations for the regression: one always-taker with weight $\pi_{a,i}^{(t)}$ and one complier with weight $\pi_{c,i}^{(t)}.$

\item For $i$ such that $z_i=0$ and $d_i=0,$ create two new observations: one complier with weight $\pi_{c,i}^{(t)}$ and one never-taker with weight $\pi_{n,i}^{(t)}.$

\item For $i$ such that $z_i=1$ and $d_i=0,$ create one never-taker with weight 1.

\item For $i$ such that $z_i=0$ and $d_i=1,$ create one always-taker with weight 1.

\end{enumerate}

\item Similarly, obtain $\bm \theta_\mathrm{PO}^{(t+1)},$ the updated estimates of the parameters in the model for the potential outcomes, by fitting weighted proportional odds models:

\begin{enumerate}

\item For $g=a,$ use all $i$ such that $z_i=1$ and $d_i=1$ with weight $\pi_{a,i}^{(t)},$ and all $i$ such that $z_i=0$ and $d_i=1$ with weight 1.

\item For $g=c,$ use all $i$ such that $z_i=1$ and $d_i=1$ and all $i$ such that $z_i=0$ and $d_i=0$ with weight $\pi_{c,i}^{(t)}.$

\item For $g=n,$ use all $i$ such that $z_i=1$ and $d_i=0$ with weight 1, and all $i$ such that $z_i=0$ and $d_i=0$ with weight $\pi_{n,i}^{(t)}.$

\end{enumerate}

\end{enumerate}

\end{enumerate}

\section{Additional Simulation Studies}\label{sec:additional-simu}

\subsection{The alternative method}

We further examine the performances of \cite{Horowitz:2000}'s bootstrap method to construct confidence intervals for partially identified parameters, by comparing it to a more rigorous approach proposed by \cite{Jiang:2018} as follows:
\begin{enumerate}
    
    \item Denote $p_{+j} + \Delta_j$ and $1 + \Delta_j,$ the ``building blocks'' of $\tau_L,$ as $L_j$ and $U_j$ respectively. Obtain their finite-sample estimates $\hat L_j$ and $\hat U_j,$ and let
    $$
    \hat q = \mathrm{argmax}_{0 \le j \le J-1} \hat L_j,
    \quad
    \hat r = \mathrm{argmin}_{0 \le j \le J-1} \hat U_j
    $$
    be the minimum indices attaining the maximum value of $\hat L_j$'s, and the minimum value of $\hat U_j$'s, respectively;
    
    \item Estimate the standard deviations of $\hat L_{\hat q}$ and $\hat U_{\hat r},$ via standard bootstrap. Denote the resulted standard errors as $\hat \sigma_{\hat q}$ and $\hat \sigma_{\hat r},$ respectively;
    
    \item Correspondingly, the $(1-\alpha)$ confidence interval for $\tau$ is
    $
    ( 
    \hat L_{\hat q} - C \hat \sigma_{\hat q},
    \hat U_{\hat r} + C \hat \sigma_{\hat r}
    ),
    $
    where we the threshold value $C$ by solving the equation
    $$
    \Phi\left\{
    C + \frac{\hat U_{\hat r} - \hat L_{\hat q}}{\max (\hat \sigma_{\hat q}, \hat \sigma_{\hat r})}
    \right\} 
    - \Phi(-C) = 1 - \alpha.
    $$
    
\end{enumerate}
By utilizing the fact that $\hat L_j$'s and $\hat U_j$'s are jointly asymptotically normal, \cite{Jiang:2018} generalized previous results by \cite{Imbens:2004} and proved that the resulted confidence interval achieves nominal coverage rate for $\tau.$

\subsection{Simulation results}

We choose the sample size $N = 200,$ and generate 50 simulation cases (i.e., probability matrices) by repeating the following procedure 50 times:
\begin{enumerate}
    \item Let $U_r \stackrel{i.i.d}{\sim} \mathrm{Unif}(0, 1)$ for all $r = 1, \ldots, 6;$
    \item Let the marginal probabilities 
    $$
    \bm p_1 = (U_1, U_2, U_3) / \sum_{k=1}^3 U_k,
    \quad
    \bm p_0 = (U_4, U_5, U_6) / \sum_{l=4}^6 U_l;
    $$
    \item Based on $\bm p_1$ and $\bm p_0$ construct a probability matrix corresponding to positively correlated potential outcomes (i.e., $\tau = \tau_U$).
\end{enumerate}
It is worth mentioning that, we intentionally use uniform random variables for the marginal probabilities, to ``objectively'' explore a wide range of potential outcome distributions. Additionally, we choose the case with positively correlated potential outcomes, because previous simulations suggest that it appear to be the most challenging.

We only focus on the coverage properties of the confidence intervals by \cite{Horowitz:2000} and \cite{Jiang:2018}, respectively. Following the main text, for each of the 100 probability matrices, we independently draw $1000$ treatment assignments from a balanced completely randomized experiment, and calculate two interval estimates of the bounds $(\tau_L, \tau_U).$ In Figure \ref{fig:add-simu} we report the coverage rates of the two confidence intervals, for both the bounds  $(\tau_L, \tau_U)$ and the parameter $\tau$ itself. We can draw several conclusions from the simulation results. First, both intervals achieve nominal coverage rates for the bounds $(\tau_L, \tau_U),$ expect for some ``edge cases'' (i.e., when $\tau \approx \tau_U \approx 1$). Second, both intervals (inevitably) over-covers $\tau.$ Third, for all cases \cite{Horowitz:2000}'s bootstrapped interval performs equally well compared to \cite{Jiang:2018}'s interval, if not better. For example, for Case 16 where
$$
\bm p_1 = (0.484, 0.134, 0.382),
\quad
\bm p_0 = (0.504, 0.295, 0.201),
\quad
\tau_L = 0.504,
\quad
\tau = \tau_U = 1,
$$
\cite{Horowitz:2000}'s and \cite{Jiang:2018}'s intervals achieve 0.957 and 0.896 coverage rates for $(\tau_L, \tau_U)$ respectively, and 0.978 and 0.914 for $\tau$ respectively.

\begin{figure}[H]
\centering
\begin{subfigure}{.5\textwidth}
  \centering
  \includegraphics[width = 1\linewidth]{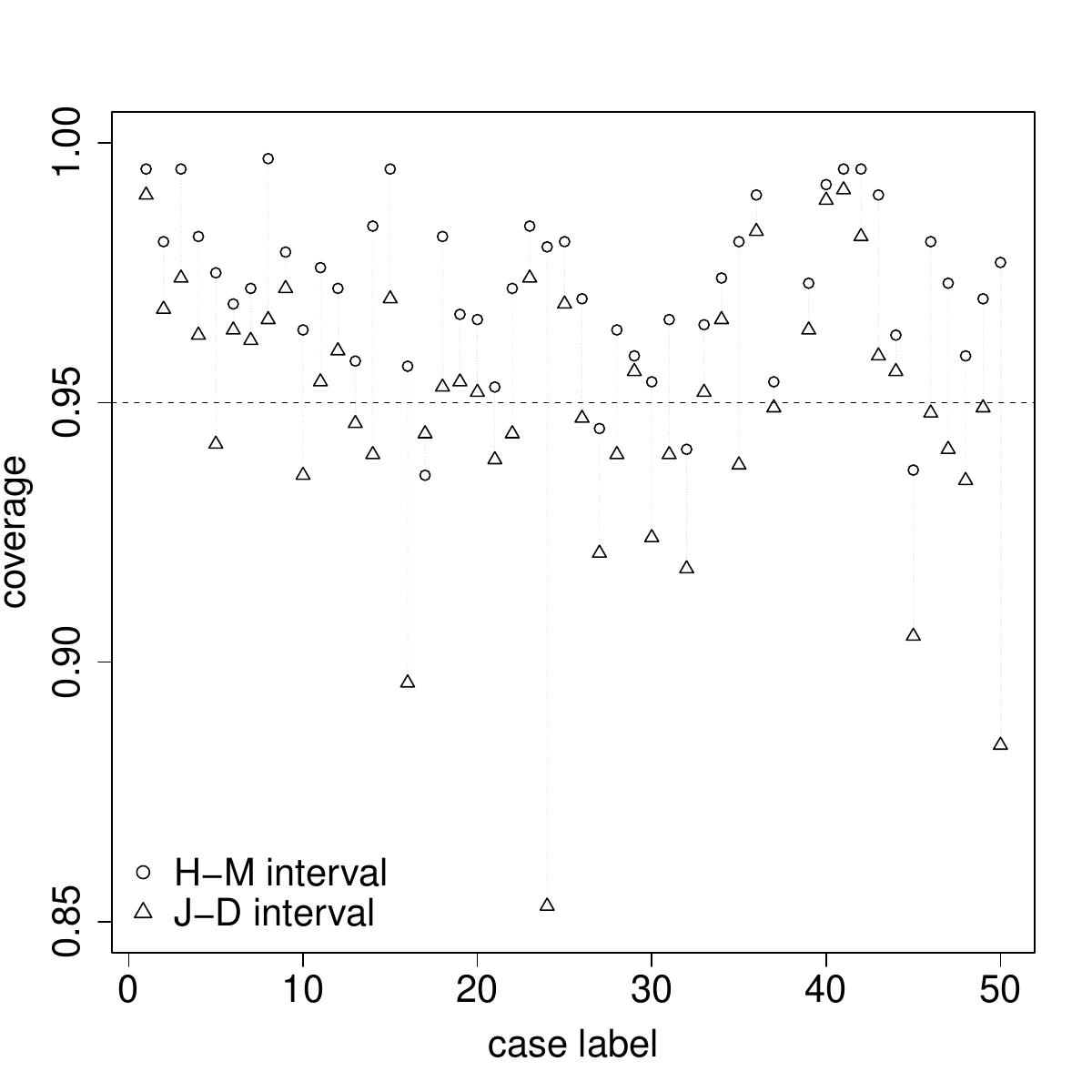}
  \caption{\small Coverage rates for the bounds $(\tau_L, \tau_U).$}
  \label{fig:sub1}
\end{subfigure}%
\begin{subfigure}{.5\textwidth}
  \centering
  \includegraphics[width = 1\linewidth]{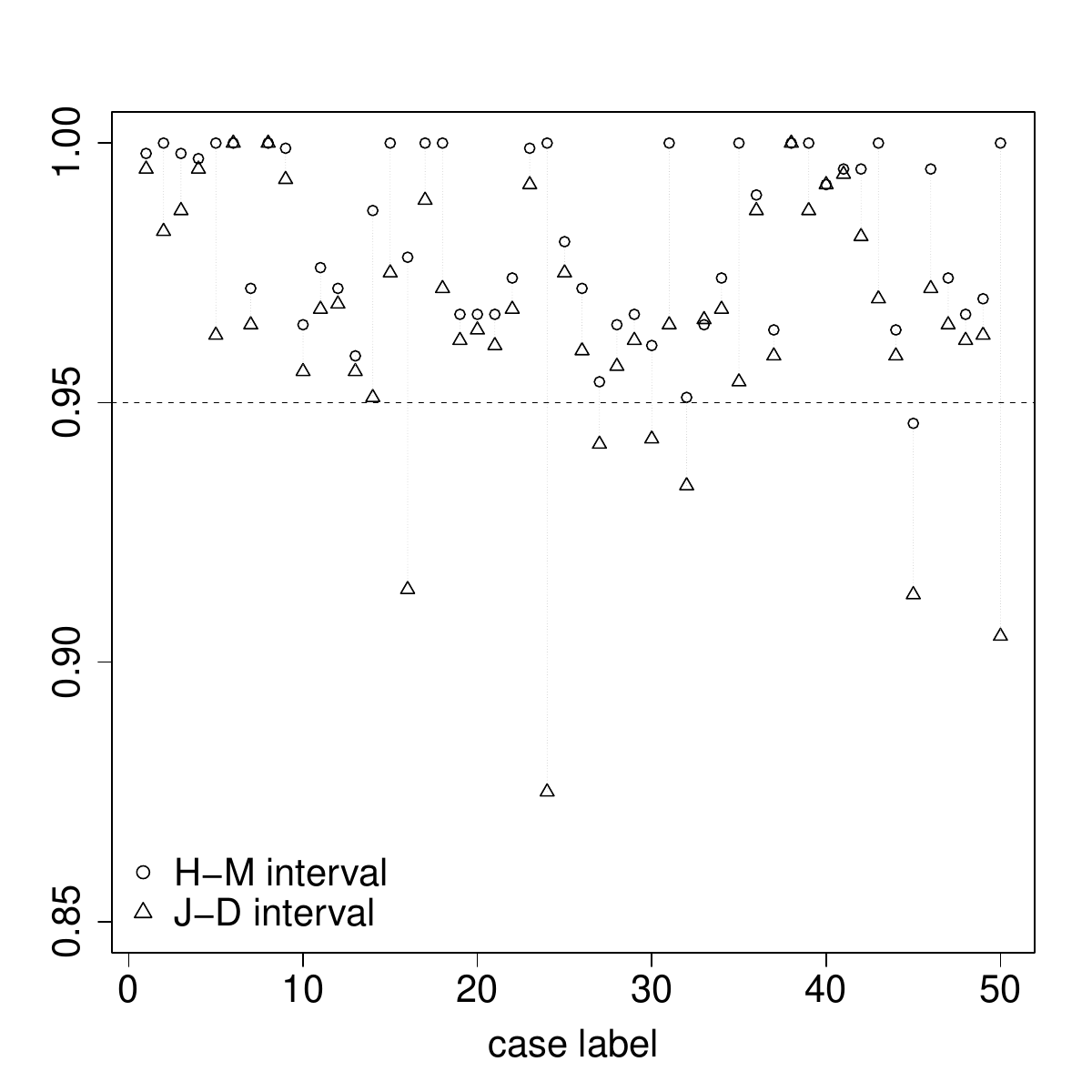}
  \caption{\small Coverage rates for the parameter $\tau.$}
  \label{fig:sub2}
\end{subfigure}
\caption{\small Additional simulation results. In each subfigure, the horizontal axis denotes the simulation case labels, and the vertical axis denotes the coverage rates for the 95\% \cite{Horowitz:2000} interval (denoted as ``H-M interval,'' round dot) and the 95\% \cite{Jiang:2018} interval (denoted as ``J-D interval,'' triangular dot).}
\label{fig:add-simu}
\end{figure}

\end{document}